\documentclass[cleveref,thm-restate]{lipics-v2021}
\hideLIPIcs
\nolinenumbers
\usepackage[all,defaultlines=3]{nowidow}

\usepackage[T1]{fontenc}
\usepackage[utf8]{inputenc}

\newcommand\shortlongversion[2]{#1}
\renewcommand\shortlongversion[2]{#2}
 
\usepackage{xcolor}
\usepackage{amssymb}
\usepackage{mathtools}
\usepackage{amsfonts}
\usepackage{amsmath}
\usepackage{amsthm}
\usepackage{complexity}
\usepackage{mathrsfs}
\usepackage{graphics}
\usepackage{microtype}
\usepackage[many]{tcolorbox}
\tcbuselibrary{theorems}
\usepackage{xspace}
\usepackage{makecell}
\usepackage{pifont}
\usepackage{bm}
\usepackage{scalerel}
\usepackage{tabularx}
\usepackage{afterpage}
\newtheorem{obs}[theorem]{Observation}
\crefname{obs}{observation}{observations}
\Crefname{obs}{Observation}{Observations}

\usepackage{subcaption}
\usepackage{float}
\usepackage{tikz}
\usetikzlibrary{
  calc,arrows,automata,fit,shapes,arrows.meta,positioning,
  decorations.pathmorphing,patterns,shapes.geometric,
  shapes.symbols,shapes.arrows,shapes.misc,
  decorations.shapes,decorations.pathreplacing
}
\usepackage{todonotes}

\definecolor{darkgreen}{rgb}{0.01,0.53,0.47}
\definecolor{darkred}{rgb}{0.62,0.13,0.28}
\definecolor{lightred}{RGB}{222,155,167}

\definecolor{myPurple}{RGB}{230,230,255}
\definecolor{interesting}{RGB}{140,0,60}
\definecolor{brown}{RGB}{210,180,140}

\newcommand{\out}{\operatorname{out}}
\newcommand{\inw}{\operatorname{in}}

\tikzset{
  paramRed/.style={
    draw, rectangle, fill=gray!20, font=\small, rounded corners=3pt,
    minimum width=2cm, align=center},
  paramGreen/.style={
    draw, rectangle, fill={rgb,255:red,255; green,230; blue,204},
    font=\small, rounded corners=3pt, minimum width=2cm, align=center},
  paramHalf/.style={
    draw, rectangle,
    minimum width=2cm, align=center, font=\small, rounded corners=3pt,
    path picture={
      \begin{scope}[sharp corners]
        \fill[gray!20]
          ($ (path picture bounding box.south west)!0.5!(path picture bounding box.north west) $)
          rectangle
          (path picture bounding box.north east);
        \fill[draw=none, fill=white]
          (path picture bounding box.south west)
          rectangle
          ($ (path picture bounding box.south east)!0.5!(path picture bounding box.north east) $);
      \end{scope}
    }
  }
}

\tcolorboxenvironment{mytheorem}{
  colframe=orange,
  sharp corners=west,
  colback=orange!10,
  rightrule=0pt,
  toprule=0pt,
  bottomrule=0pt,
  top=0pt,
  right=0pt,
  bottom=0pt,
}

\tcolorboxenvironment{mylemma}{
  colframe=yellow,
  sharp corners=west,
  colback=yellow!10,
  rightrule=0pt,
  toprule=0pt,
  bottomrule=0pt,
  top=0pt,
  right=0pt,
  bottom=0pt,
}

\renewcommand{\epsilon}{\varepsilon}
\renewcommand{\phi}{\varphi}

\newcommand{\pw}{\mathrm{pw}}
\newcommand{\ctw}{\mathrm{ctw}}

\newcommand{\dist}{\mathrm{dist}}

\newcommand{\XNLP}{\ensuremath{\mathsf{XNLP}}}

\newcommand{\Omc}{\ensuremath{\mathcal{O}}\xspace}

\newcommand{\Oof}{\Omc}

\newcommand{\N}{\mathbb{N}}
\newcommand{\Z}{\mathbb{Z}}

\newcommand{\PathDiscovery}{\textnormal{\textsc{Path Discovery}}\xspace}
\newcommand{\ShortestPathDiscovery}{\textnormal{\textsc{Shortest Path Discovery}}\xspace}
\newcommand{\BothPathDiscovery}{\textnormal{\textsc{(Shortest) Path Discovery}}\xspace}

\newcommand{\indeg}{\inw}

\Crefname{corollary}{Corollary}{Corollaries}
\Crefname{lemma}{Lemma}{Lemmas}
\Crefname{section}{Section}{Sections}
\Crefname{theorem}{Theorem}{Theorems}
\Crefname{proposition}{Proposition}{Propositions}
\Crefname{observation}{Observation}{Observations}
\Crefname{remark}{Remark}{Remarks}

\title{Separating Feasibility and Movement in Solution Discovery: The Case of Path Discovery}

\titlerunning{Separating Feasibility and Movement in Solution Discovery}

\author{Hanno von Bergen}
{Universität Hamburg, Germany}
{hanno.von.bergen@uni-hamburg.de}
{}
{}

\author{Larissa Fastenau}
{University of Bremen, Germany}
{larissa.fast.15@gmail.com}
{}
{}

\author{Enna Gerhard}
{University of Bremen, Germany}
{gerhard@uni-bremen.de}
{https://orcid.org/0000-0002-7767-6637}
{}

\author{Nicola Lorenz}
{Universität Hamburg, Germany}
{nicola.lorenz@uni-hamburg.de}
{https://orcid.org/0009-0000-1991-1523}
{}

\author{Stephanie Maaz}
{University of Waterloo, Canada}
{smaaz@uwaterloo.ca}
{https://orcid.org/0000-0001-7188-8834}
{}

\author{Amer E. Mouawad}
{American University of Beirut, Lebanon}
{aa368@aub.edu.lb}
{https://orcid.org/0000-0003-2481-4968}
{}

\author{Roman Rabinovich}
{Technische Universität Berlin, Germany}
{roman.rabinovich@tu-berlin.de}
{}
{}

\author{Nicole Schirrmacher}
{University of Bremen, Germany}
{schirrmacher@uni-bremen.de}
{https://orcid.org/0000-0002-1740-7478}
{}

\author{Daniel Schmand}
{University of Bremen, Germany}
{schmand@uni-bremen.de}
{https://orcid.org/0000-0001-7776-3426}
{}

\author{Sebastian Siebertz}
{University of Bremen, Germany}
{siebertz@uni-bremen.de}
{https://orcid.org/0000-0002-6347-1198}
{}

\author{Mai Trinh}
{Universität Hamburg, Germany}
{mai.trinh@uni-hamburg.de}
{}
{}

\authorrunning{S. Siebertz et al.}

\Copyright{Hanno von Bergen, Larissa Fastenau, Enna Gerhard, Nicola Lorenz, Stephanie Maaz, Amer Mouawad, Roman Rabinovich, Nicole Schirrmacher, Daniel Schmand, Sebastian Siebertz, and Mai Trinh}

\ccsdesc[500]{Theory of computation~Parameterized complexity and exact algorithms}
\ccsdesc[300]{Theory of computation~Routing and network design problems}
\ccsdesc[300]{Mathematics of computing~Graph algorithms} 

\keywords{solution discovery, shortest path discovery, token sliding, parameterized complexity}

\category{}
\relatedversion{}
\supplement{}

\acknowledgements{}

\EventEditors{Editor 1 and Editor 2}
\EventNoEds{2}
\EventLongTitle{}
\EventShortTitle{MFCS 2026}
\EventAcronym{MFCS}
\EventYear{2026}
\EventDate{Month 00--00, 2026}
\EventLocation{City, Country}
\EventLogo{}
\SeriesVolume{XX}
\ArticleNo{XX}

\begin{document}

\maketitle

\begin{abstract}
We study \emph{solution discovery}, where the goal is to obtain a feasible solution to a problem from an initial configuration by a bounded sequence of local moves. 
In many applications, however, the graph that defines which vertex sets are feasible is not the same as the graph that governs how tokens, agents, or resources may move. 
Existing models such as token sliding and token jumping typically do not distinguish the problem graph and the movement graph. 
Motivated by this mismatch, we introduce a directed weighted \emph{two-graph model} that cleanly separates feasibility from movement. 
A problem graph specifies the desired combinatorial objects, while a movement graph specifies admissible relocations and their costs. 
This yields a flexible framework that captures asymmetry, heterogeneous movement constraints, and weighted transitions, while subsuming classical discovery models as special cases.

We investigate this model through \textsc{Path Discovery} and \textsc{Shortest Path Discovery}, where the task is to realize a vertex set containing an $s$-$t$-path or a shortest $s$-$t$-path in the problem graph. 
These problems are particularly natural in applications, since directed and weighted shortest paths are among the most fundamental algorithmic primitives. 
At the same time, previous work has already shown that discovery can be computationally hard even when the underlying optimization problem is easy. 
Our results show that this phenomenon persists, and becomes especially rich, in the two-graph setting. 
We obtain a detailed complexity picture, identifying tractable cases as well as strong hardness results.
\end{abstract}

\section{Introduction}
Combinatorial reconfiguration studies the structure of the solution space of a
combinatorial problem under local transformation rules.
Typically, one is given two feasible solutions and asks whether one can transform one
into the other by a sequence of small local moves while maintaining feasibility
throughout.
Over the past years, combinatorial reconfiguration has developed into a rich area at
the interface of algorithms, complexity theory, graph theory, and optimization; we
refer to the surveys by van den Heuvel and Nishimura for background and further
references~\cite{Nishimura2018, Heuvel2013}.
Besides its intrinsic mathematical appeal, reconfiguration is motivated by dynamic and
incremental settings in which a system state must be adapted while maintaining feasibility.

Recently, Fellows et al.\ introduced the framework of
\emph{solution discovery via reconfiguration} as a framework inspired by combinatorial reconfiguration~\cite{fellows2023solution}.
Here, one is not given a target solution in advance.
Instead, one starts from an initial configuration and asks whether it can be modified
by a bounded sequence of local moves so as to obtain \emph{some} feasible solution.
This viewpoint is natural in settings where a system is already partially deployed,
resources are already placed, or a previously valid state has become infeasible after
local changes.
The framework connects reconfiguration with local search, repair, reoptimization, and
dynamic adaptation, and it has already led to a systematic study of discovery variants
of classical graph problems~\cite{bousquet2025algorithmic, fellows2023solution,GroblerEtAlICALP24, GroblerMMNRS242,saito2026solution}.


A conceptual limitation of the classical token sliding and token jumping models is
that they do not distinguish two structures that are often different in applications:
the structure that defines \emph{feasibility} of a target solution, and the structure
that governs \emph{movement}.
For example, consider the task of setting up a wireless communication system by
positioning radio-equipped vehicles such that all areas of interest are covered.
The coverage depends on the topology of the area, which determines which sets of
positions yield a feasible solution.
In contrast, the movement of the vehicles is constrained by the road network, which
imposes its own structure, directionality, and costs.
Thus, the structure governing feasibility and the structure governing movement are
inherently different.
This phenomenon is common in applications where resources must be arranged to satisfy
structural constraints (such as routes, assignments, or coverage), but can only be
moved along a different underlying network with its own constraints.
Such examples suggest that feasibility and movement should be modeled separately.


Motivated by this observation, we introduce a directed weighted \emph{two-graph model}
for solution discovery.
In this model, we are given a directed weighted \emph{problem graph} $G$, which specifies the target
objects whose discovery we are interested in, and a directed weighted
\emph{movement graph}~$M$, which specifies admissible token moves and their costs.
An instance further contains an initial token configuration and a movement budget.
The aim is to move tokens in $M$ so that the finally occupied vertices realize a
desired structure in $G$.
Classical models arise as special cases of our framework.
The token sliding model is obtained when $M$ coincides with $G$ (or its bi-directed
version), so that tokens may only move along edges of the problem graph.
The token jumping model is obtained when $M$ is the complete bi-directed graph with
unit costs, allowing tokens to move freely between any pair of vertices.
Thus, our model strictly generalizes the standard ones while retaining them as natural
special cases.

We study this new paradigm through the lens of \textsc{Path Discovery} and
\textsc{Shortest Path Discovery}.
In \textsc{Path Discovery}, the goal is to realize a vertex set that contains some directed
$s$-$t$-path in the problem graph $G$; in \textsc{Shortest Path Discovery}, the goal
is to realize a vertex set containing a shortest directed $s$-$t$-path in $G$.
This choice is natural for two reasons.
First, directed and weighted shortest paths are among the most fundamental problems in
classical algorithmics, with countless applications and a long history of efficient algorithms.
Second, they provide a clean way to isolate the effect of the discovery process itself:
while the underlying shortest-path problem is polynomial-time solvable, its discovery variant
is already computationally hard even in the standard undirected token-sliding
setting~\cite{GroblerEtAlICALP24}.
Thus, \textsc{Shortest Path Discovery} is a particularly well-suited benchmark for
understanding how the separation of feasibility and movement affects the complexity landscape.


\textsc{Shortest Path Reconfiguration} in undirected graphs has already been studied in the classical reconfiguration setting, where
both the initial and the target shortest path are given and every intermediate state
must remain feasible~\cite{KaminskiMedvedevMilanic2012}, and \textsc{Shortest Path Discovery} in undirected graphs has been studied in~\cite{GroblerEtAlICALP24}. 
In our study it quickly turns out that \textsc{Path Discovery} is the more fundamental problem in the two-graph model, as \textsc{Shortest Path Discovery} can in most cases be reduced to it by restricting the edge set of the problem graph to those edges that lie on shortest $s$-$t$-paths, while keeping the vertex set unchanged.
In particular, whenever the considered parameter is not increased by passing to this
shortest-path subgraph, our tractability results for \textsc{Path Discovery} also apply to
\textsc{Shortest Path Discovery}.


Our results show that the separation of feasibility and movement leads to a robust and
expressive framework while preserving substantial algorithmic structure.
We obtain a polynomial-time algorithm for the token jumping case, and we show that \textsc{(Shortest) Path Discovery} is
fixed-parameter tractable parameterized by the number~$k$ of tokens.
\textsc{(Shortest) Path Discovery} is fixed-parameter tractable for the
structural parameter feedback edge set number. 
We show that \textsc{(Shortest) Path Discovery} is fixed-parameter tractable when parameterized by any parameter that functionally bounds the number of vertices on all simple directed $s$-$t$-paths in~$G$, e.g.\ the parameter vertex integrity or treedepth of the underlying undirected graph, or for parameter ``distance from $s$ to $t$'' for \textsc{Shortest Path Discovery} in case all edge weights are positive. 
We also obtain \textsf{XP}-algorithms on graph classes of bounded treewidth when treewidth is
measured in the undirected union of the problem graph and the movement graph.

On the hardness side, we show that the problem remains difficult already in very restricted settings.
In particular, \textsc{Path Discovery} is para-$\mathsf{NP}$-hard for parameter distance from $s$ to $t$ even
when all edge weights in the problem graph are positive, while allowing edge weight
$0$ yields para-$\mathsf{NP}$-hardness for both \textsc{Path Discovery} and
\textsc{Shortest Path Discovery} with respect to the parameter distance from $s$ to $t$.
Similarly, if zero-weight edges are allowed in the movement graph, then both problems
become para-$\mathsf{NP}$-hard for the budget parameter $b$.
In the classical undirected unweighted setting, we also obtain new hardness results,
including $\mathsf{NP}$-hardness on planar graphs and \XNLP-hardness for cutwidth (which implies $\mathsf{W}[i]$-hardness for all $i$ for these parameters). 
Since cutwidth bounds treewidth, the cutwidth hardness also yields hardness for
treewidth, and therefore suggests that our XP-algorithm for treewidth cannot be
improved to an \FPT-algorithm under standard assumptions.
Moreover, the same ideas show hardness for the parameter directed feedback vertex set.

Taken together, our results show that separating feasibility and movement is both
natural and algorithmically fruitful.
The model captures asymmetry, heterogeneity, and weighted movement costs that are
unavoidable in realistic applications, while opening a broad new landscape of
complexity questions at the interface of shortest paths, solution discovery, and
structural parameterized complexity.

\shortlongversion{Due to space constraints, we only sketch some of the results in the submitted abstract. The full version can be found in the appendix.}{}

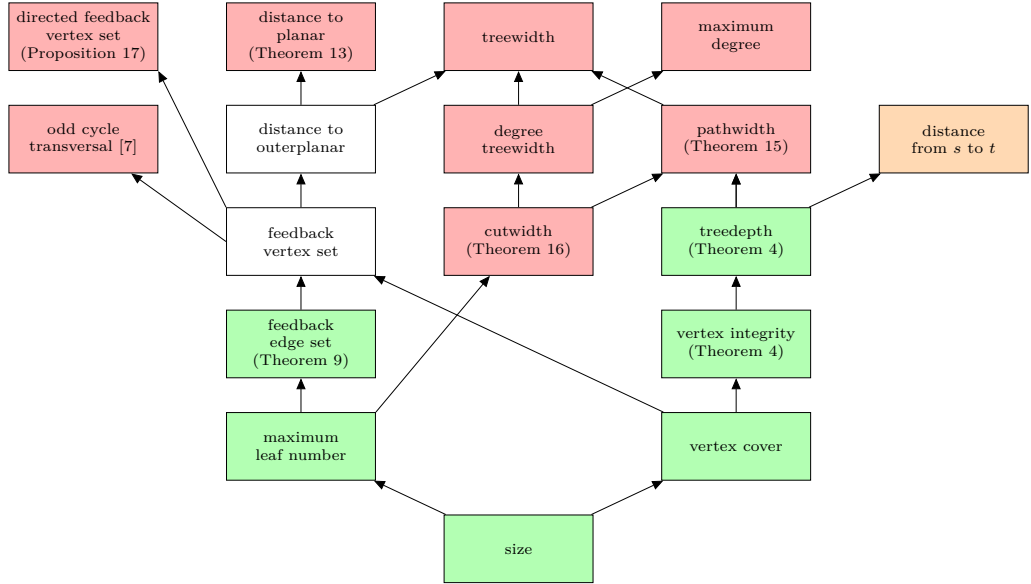
\begin{figure}
  \centering
  \scalebox{.9}{
    \tikzset{
  box/.style={
      draw,
      rectangle,
      minimum width=22mm,
      minimum height=10mm,
      align=center,
      font=\fontsize{6.5pt}{7.5pt}\selectfont,
      inner sep=1pt
    },
  arrow/.style={
      {-Stealth[inset=0pt, angle=45:5pt]}
    }
}

\begin{tikzpicture}[node distance=10mm]
  \node[box, fill=green!30] (size) {size};
  \node[box, fill=green!30, above left=5mm and 10mm of size] (leaf) {maximum \\leaf number};
  \node[box, fill=green!30, above right=5mm and 10mm of size] (vc) {vertex cover};
  \node[box, fill=green!30, above=5mm of leaf] (fes) {feedback \\edge set \\(\Cref{thm:path-fes})};
  \node[box, above=5mm of fes] (fvs) {feedback \\vertex set};
  \node[box, fill=red!30, above left=5mm and 10mm of fvs] (oct) {odd cycle \\transversal~\cite{GroblerEtAlICALP24}};
  \node[box, fill=red!30, above=5mm of oct] (dfvs) {directed feedback \\vertex set \\(\Cref{prop:dfvs})};
  \node[box, fill=red!30, right= of fvs] (cw) {cutwidth\\(\Cref{thm:xnlp-hardness-cutwidth})};
  \node[box, above=5mm of fvs] (distouter) {distance to \\outerplanar};
  \node[box, fill=red!30, above=5mm of distouter] (planar) {distance to \\planar \\(\Cref{thm:planar-hardness})};
  \node[box, fill=red!30, above=5mm of cw] (dtw) {degree \\treewidth};
  \node[box, fill=red!30, right= of dtw] (pw) {pathwidth \\(\Cref{thm:xnlp-hardness-pathwidth})};
  \node[box, fill=red!30, above=5mm of dtw] (tw) {treewidth};
  \node[box, fill=red!30, above=5mm of pw] (deg) {maximum \\degree};
  \node[box, fill=green!30, above=5mm of vc] (vi) {vertex integrity \\(\Cref{thm:param-induced-path})};
  \node[box, fill=green!30, above=5mm of vi] (td) {treedepth \\(\Cref{thm:param-induced-path})};
  \node[box, fill=orange!30, above right=5mm and 10mm of td] (diam) {distance\\ from $s$ to $t$};

  \draw[arrow] (size) -- (leaf);
  \draw[arrow] (size) -- (vc);
  \draw[arrow] (leaf.north east) -- (cw);
  \draw[arrow] (leaf) -- (fes);
  \draw[arrow] (vc) -- (vi);
  \draw[arrow] (vc) -- (fvs);
  \draw[arrow] (cw) -- (dtw);
  \draw[arrow] (cw) -- (pw);
  \draw[arrow] (fes) -- (fvs);
  \draw[arrow] (vi) -- (td);
  \draw[arrow] (dtw) -- (deg);
  \draw[arrow] (dtw) -- (tw);
  \draw[arrow] (pw) -- (tw);
  \draw[arrow] (fvs) -- (distouter);
  \draw[arrow] (fvs.west) -- (oct);
  \draw[arrow] (fvs.north west) -- (dfvs.south east);
  \draw[arrow] (td) -- (pw);
  \draw[arrow] (td) -- (diam);
  \draw[arrow] (td) -- (pw);
  \draw[arrow] (distouter) -- (planar);
  \draw[arrow] (distouter) -- (tw);
\end{tikzpicture}
  }
  \caption{Parameters considered in this paper. \textsc{Shortest Path Discovery} is tractable when parameterized by distance from $s$ to $t$ (in particular by parameter diameter) when all edge weights are positive. The problem is hard when zero weights are allowed. All parameters except for directed feedback vertex set and distance refer to the measure on the underlying undirected movement graph.}
  \label{fig:parameters}
\end{figure}

\section{Preliminaries}
\label{sec:preliminaries}

For an integer $n\in \N$, we write $[n]\coloneqq \{1,\dots,n\}$.
For a graph $G$, we denote its vertex and edge sets by $V(G)$ and $E(G)\subseteq V(G)\times V(G)$, respectively.
We additionally assume a weight function $w_G\colon E(G)\to \Z$ assigning an integer weight
to each edge. 
For $s,t\in V(G)$, an \emph{$s$-$t$ path} is a sequence
$P=(v_0, v_1,\ldots,v_\ell)$ of pairwise distinct vertices with $v_0=s$, $v_\ell=t$, and $e_i=(v_{i-1},v_i)\in E(G)$ for all $i\in[\ell]$.
The \emph{weight} of a path $P$ is
$w_G(P)\;\coloneqq\; \sum_{i=1}^{\ell} w_G(e_i)$,
and when~$G$ is clear from the context, we simply write $w(e)$ and $w(P)$.
The \emph{distance} from $u$ to $v$ in $G$, denoted $\dist_G(u,v)$, is the minimum weight of a $u$-$v$ path
(if no such path exists, we set $\dist_G(s,t)=\infty$).
A \emph{shortest $s$-$t$ path} is an $s$-$t$ path $P$ with $w_G(P)=\dist_G(s,t)$.
Note that distance is well-defined as we exclude negative cycles and that a shortest walk will always contain a shortest path of the same weight.

Whenever convenient, we view an undirected graph $H$ as a directed graph by replacing
each undirected edge $\{x,y\}\in E(H)$ with two directed edges $(x,y)$ and $(y,x)$ of the same weight.
Conversely, if $G$ is a graph, we write $U$ for the underlying undirected graph with $V(U)=V(G)$ and $\{x,y\}\in E(U)$ if and only if $(x,y)\in E(G)$ or $(y,x)\in E(G)$. 

We work with two directed weighted graphs on a common vertex set:
the \emph{problem graph}~$G$, which specifies feasibility/optimality of target objects,
and the \emph{movement graph} $M$, which specifies admissible token moves and their costs.
Problem-graph edge weights may be arbitrary integers, but we assume that $G$
contains no negative cycle. Movement costs are nonnegative integers, 
\(w_M:E(M)\to \mathbb Z_{\geq 0}\).

The restriction on the weights of $G$ is standard, as it ensures e.g.\ that distances are well defined. 
The reason for the restriction on the weights of $M$ is that with negative movement edges, unused tokens could be moved along negative-cost paths and then lifted at the end, invalidating some of our observations. 

A \emph{configuration} is a multiset $S\subseteq V(M)$. Vertices in $S$ are called \emph{occupied} by tokens and vertices in
$V(M)\setminus S$ are \emph{free}.
A \emph{(token) move} in configuration $S$ consists of choosing a token vertex $x\in S$ and an edge $(x,y)\in E(M)$, and updating the configuration to
\[
S' \;=\; (S\setminus\{x\}) \cup \{y\} \text{ (as a multiset)}.
\]
The cost of this move is $w_M(x,y)$.
A \emph{discovery sequence} from $S$ to $S'$ is a sequence of configurations
$S=S_0,S_1,\dots,S_r=S'$ such that $S_{i+1}$ is obtained from $S_i$ by a single move for each $i\in\{0,\dots,r-1\}$.
Its total cost is $\sum_{i=0}^{r-1} w_M(e_i)$ where $e_i$ is the movement edge used in the $i$th move.
We say that $S'$ is \emph{reachable from $S$ within budget $b$} if there exists a discovery sequence from $S$
to $S'$ of total cost at most $b$.


An instance of \textsc{(Shortest) Path Discovery} consists of a problem graph $G$, a movement graph $M$
(on the same vertex set), two distinct terminals $s,t\in V(G)$, an initial configuration $S\subseteq V(M)$,
and a budget $b\in \Z$.
The question is whether there exists a configuration $T\subseteq V(M)$ reachable from $S$ within budget $b$ such that $T$ contains the vertex set of some (shortest) $s$-$t$-path in~$G$, that is, there exists a (shortest) $s$-$t$-path $P$ in $G$ with $V(P)\subseteq T$.

We emphasize that we do \emph{not} require $T=V(P)$ as it was the case in~\cite{GroblerEtAlICALP24} for the unweighted case. The motivation for this requirement in~\cite{GroblerEtAlICALP24} is that in unweighted graphs, the number of vertices on a shortest path is uniquely determined, and we hence know exactly how many tokens are required for a shortest path. This is not the case in the weighted model.
Equivalently, we may allow to lift tokens in the end at zero cost, and we will take this intuitive view, when we would like to say that the final token configuration $T$ highlights the discovered path $P$.  
We also remark that the stacking of multiple tokens on single vertices was not allowed in previous work, but we allow this just for convenience. If during a sequence two tokens would temporarily ``cross'' or meet, one may simply swap their roles: we may simply let the other token move on and let the first token take its place. 


\medskip
We assume familiarity with parameterized complexity, and refer e.g.\ to~\cite{cygan2015parameterized} for more background. 
A problem is \emph{fixed-parameter tractable} (\FPT) with respect to a parameter $p$ if it can be solved in time
$f(p)\cdot n^{O(1)}$ for some computable function $f$.
It is \XP\ with respect to $p$ if it can be solved in time $n^{f(p)}$.
We use $\W[i]$ for the standard $\W$-hierarchy.
We also use the class \textsf{XNLP} and \textsf{XNLP}-hardness in the sense of parameterized logspace computation;
in particular, \textsf{XNLP}-hardness implies $\W[i]$-hardness for every~$i$.

\section{The tractable cases}

In this section, we provide our positive results. 

\subsection{Parameter number of tokens}

In this section, we prove that \textsc{Path Discovery} and \textsc{Shortest Path Discovery} are fixed-parameter tractable when parameterized by the number of tokens $k$. 

We begin with a remark that allows us to focus on \textsc{Path Discovery} (in most cases) when proving positive results. 

\begin{obs}\label{obs:SPD=PD}
    Let $G$ be a graph and let
\[
d_s(v):=\dist_G(s,v),\qquad
d_t(v):=\dist_G(v,t),\qquad
D:=\dist_G(s,t).
\]
If $D=\dist_G(s,t)=\infty$, then the instance is immediately negative.
Otherwise, define the subgraph $G^\star$ of $G$ by $V(G^\star)=V(G)$,
and
\[
 E(G^\star)=
 \{(u,v)\in E(G)\mid d_s(u)<\infty,\ d_t(v)<\infty,\ 
 d_s(u)+w_G(u,v)+d_t(v)=D\}.
\]
Then the directed $s$-$t$-paths in $G^\star$ are exactly the shortest directed $s$-$t$-paths in $G$.
\ShortestPathDiscovery on $G$ is equivalent to \PathDiscovery on $G^\star$.
\end{obs}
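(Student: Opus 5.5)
The plan is to establish the path characterization first; the equivalence of the two discovery problems then follows formally. The instance data other than the problem graph (the movement graph $M$, the terminals $s,t$, the initial configuration $S$ and the budget $b$) stay unchanged, and $V(G^\star)=V(G)$. So the set of configurations $T$ reachable from $S$ within budget $b$ is the same in both instances. It therefore suffices to show that a vertex set $V(P)$ arises from a shortest $s$-$t$-path $P$ in $G$ exactly when it arises from an $s$-$t$-path in $G^\star$. We prove the stronger statement that the two families of paths coincide as vertex sequences.

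For the first inclusion, let $P=(v_0,\dots,v_\ell)$ be a shortest $s$-$t$-path in $G$, so $D<\infty$. Every prefix $(v_0,\dots,v_i)$ is an $s$-$v_i$ path and every suffix is a $v_i$-$t$ path. Hence $d_s(v_{i})\le w(P[v_0,v_i])$ and $d_t(v_i)\le w(P[v_i,v_\ell])$, and in particular both are finite. For an edge $(v_{i-1},v_i)$ we then get
\[
d_s(v_{i-1})+w(v_{i-1},v_i)+d_t(v_i)\le w(P)=D .
\]
For the reverse inequality, concatenate a shortest $s$-$v_{i-1}$ path, the edge, and a shortest $v_i$-$t$ path. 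This gives an $s$-$t$ walk of that weight. By the remark in the preliminaries, and since there are no negative cycles, this walk contains an $s$-$t$ path of weight at most the walk's weight. So the sum is at least $D$, equality holds, and every edge of $P$ lies in $E(G^\star)$.

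For the converse, let $P=(v_0,\dots,v_\ell)$ be an $s$-$t$-path in $G^\star$. We show by induction that $d_s(v_i)+d_t(v_i)=D$ and $w(P[v_0,v_i])=d_s(v_i)$. The base case uses $d_s(s)=0$, which holds because no negative cycle exists. For the induction step, the edge condition gives
\[
d_s(v_{i-1})+w(v_{i-1},v_i)+d_t(v_i)=D .
\]
Combining this with $d_s(v_{i-1})=D-d_t(v_{i-1})$ and the triangle inequality $d_s(v_i)\le d_s(v_{i-1})+w(v_{i-1},v_i)$, we get $d_s(v_i)+d_t(v_i)\le D$. The inequality $\ge D$ holds by the same walk-concatenation argument as before, so equality follows. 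Telescoping then yields $w(P)=d_s(t)=D$, because $d_t(t)=0$. Hence $P$ is a shortest $s$-$t$-path in $G$.

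The main obstacle is the careful handling of zero and negative weights: paths must be simple, and distances are defined via paths rather than walks. Every inequality of the form "walk weight $\ge D$" must therefore go through the fact that, in the absence of negative cycles, any $s$-$t$ walk contains an $s$-$t$ path of no greater weight. This also covers $d_s(s)=0$ and $d_t(t)=0$. Once the path families coincide, the equivalence of the discovery problems is immediate.
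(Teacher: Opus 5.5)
Your proposal is correct. The paper states this as an observation and gives no proof, so there is no competing route to compare against. Your argument is the standard one: show $d_s(v)+d_t(v)=D$ on vertices of $G^\star$-paths, then telescope.

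You handle the one delicate point correctly. Distances are defined over simple paths and weights may be zero or negative. So every inequality of the form ``walk weight $\geq D$'' must go through the fact that, without negative cycles, an $s$-$t$ walk contains an $s$-$t$ path of no larger weight. You also correctly keep $M$, $S$ and $b$ unchanged when passing to $G^\star$. This matters in the token-sliding special case $M=G$, where one must not shrink the movement graph as well.

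Some minor remarks:
\begin{itemize}
\item $d_s(s)=0$ and $d_t(t)=0$ hold by definition. With pairwise distinct vertices, the only $s$-$s$ path is the trivial one, so the absence of negative cycles is not what is needed there.
\item The induction can be replaced by a per-edge argument. For $(u,v)\in E(G^\star)$, the triangle inequalities $d_s(v)\le d_s(u)+w(u,v)$ and $d_t(u)\le w(u,v)+d_t(v)$, together with $d_s(x)+d_t(x)\geq D$, give $d_s+d_t=D$ at both endpoints. Hence $w(u,v)=d_t(u)-d_t(v)$, which telescopes along $P$ to $d_t(s)-d_t(t)=D$. In your inductive step, the hypothesis $d_s(v_{i-1})=D-d_t(v_{i-1})$ is in fact never used.
\item The clause for $D=\infty$ is immediate and could be stated explicitly: no $s$-$t$ path means no shortest one.
\item The paper uses the equivalence as a reduction. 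For that purpose it is worth noting that $G^\star$ is computable in polynomial time, for example by Bellman--Ford, since there are no negative cycles.
\end{itemize}
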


\begin{theorem}
\label{thm:two-graph-discovery-fpt-k}
\BothPathDiscovery is fixed-parameter tractable parameterized by~$k$.
\end{theorem}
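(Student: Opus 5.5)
By \Cref{obs:SPD=PD}, it suffices to treat \PathDiscovery: computing $G^\star$ takes polynomial time (Bellman--Ford, since $G$ has no negative cycles), and the number of tokens does not change.

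The first step is a structural reduction. Since tokens may stack and unused tokens may be lifted at zero cost, moves of distinct tokens do not interact. Hence a configuration $T$ is reachable within budget $b$ if and only if there is an assignment of tokens to vertices of $T$ whose total cost $\sum_i \dist_M(x_i, y_i)$ is at most $b$, where $x_i$ is the start and $y_i$ the end vertex of token $i$. All distances are nonnegative because $w_M\ge 0$. The instance is therefore positive if and only if there is an $s$-$t$-path $P$ in $G$ with $|V(P)|\le k$ and an injection $\phi\colon V(P)\to S$ (tokens as a multiset) such that $\sum_{v\in V(P)} \dist_M(\phi(v),v)\le b$. We precompute all pairwise distances in $M$ with Dijkstra.

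The second step finds a cheapest path of at most $k$ vertices together with its matching, using colour coding over the token set, which has size $k$. We run a dynamic program over states $(v,R)$, where $v\in V(G)$ and $R\subseteq S$ is the set of tokens (as indices) already used. The value $c(v,R)$ is the minimum cost of an $s$-$v$ walk in $G$ with $|R|$ vertices, matched bijectively to $R$. The transition from $(u,R)$ along an edge $(u,v)\in E(G)$ to $(v,R\cup\{j\})$, for $j\notin R$, adds $\dist_M(x_j,v)$. This gives $2^k\cdot\mathrm{poly}(n)$ states and transitions. We accept if $c(t,R)\le b$ for some $R$.

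The main obstacle is that this program tracks walks, not paths: a vertex could repeat and be charged for two tokens. I would argue this is harmless. A walk that reaches $t$ contains an $s$-$t$-path on a subset of its vertices. Restricting the matching to that subset drops nonnegative terms, so the cost does not increase. Conversely, every genuine path with its matching is itself a walk. Hence the walk-based optimum equals the path-based optimum with respect to feasibility under budget $b$. No colour coding on vertices is needed, and the token subsets alone give the $2^k$ factor.

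Finally, I would verify the equivalence claimed in the first step. The forward direction decomposes a discovery sequence into per-token trajectories, each costing at least the $M$-distance of its endpoints. The backward direction moves each token independently along a shortest $M$-path; stacking is allowed, so no conflicts arise. This yields running time $2^k\cdot n^{O(1)}$ for \BothPathDiscovery.
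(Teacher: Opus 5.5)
Your argument is correct, and it takes a genuinely different route from the paper.

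The paper's proof works as follows:
\begin{itemize}
\item It guesses the ordered list $(y_1,\dots,y_r)$ of participating tokens, which gives at most $2^k k!$ guesses.
\item It builds a layered unraveling graph in which a vertex $v$ placed in layer $i$ is charged $\dist_M(y_i,v)$.
\item It applies colour coding with perfect hash families to the \emph{path vertices*} and solves a weighted colourful-path problem for each guess. This forces the projected walk to be simple.
\end{itemize}

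You instead put the set of used tokens into the DP state, so injectivity of the token assignment is automatic. You then give up on simplicity of the walk and recover a path by loop erasure. Erasing cycles keeps a subsequence of the positions: consecutive survivors are still joined by arcs of $G$, and the first and last survivors are still $s$ and $t$. So you still have an injective token assignment, and you have only deleted terms $\dist_M(x_j,v)\ge 0$.

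Two wording points should be made explicit. When you say ``restrict the matching to that subset'', state that for a vertex visited several times you keep the token of one surviving position; any choice works by nonnegativity. Also, $c(v,R)$ counts $|R|$ \emph{positions} of the walk, not distinct vertices.

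This is exactly the shortcut the paper itself uses for the token-jumping case. Here it buys you a deterministic $2^k\cdot n^{O(1)}$ algorithm with no $k!$ factor and no hash families. You also give a cleaner justification of the reduction from discovery sequences to token-to-vertex assignments.

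The paper's colour-coding approach enforces simplicity directly, so it does not depend on the per-position costs being nonnegative. Under the standing assumption $w_M\ge 0$, that extra generality is not needed. Indeed, the paper's own guessed-order construction could drop colour coding by the same loop-erasure observation.
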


\shortlongversion{
    \begin{proof}[Proof sketch]
By \Cref{obs:SPD=PD}, it suffices to consider \textsc{Path Discovery}.  Let \(k\) be the number of tokens.  We guess the number \(r\le k\) of tokens that will occupy the discovered path, together with these tokens in their order along the path.  For a fixed ordered list \(Y=(y_1,\ldots,y_r)\), we build a layered auxiliary digraph whose \(i\)-th layer contains a copy \((v,i)\) of every vertex \(v\in V(G)\); an arc from layer \(i\) to layer \(i+1\) represents choosing a problem-graph edge of \(G\), and its weight is the movement cost of token \(y_{i+1}\) to the new vertex.  Thus, an \(\alpha\)-\(\omega\) path in this auxiliary graph represents an \(s\)-\(t\)-walk in \(G\), with total weight equal to the cost of moving the guessed tokens to the chosen vertices.  The only remaining issue is to ensure that the projected walk is simple.  We do this by the color-coding technique of Alon, Yuster, and Zwick~\cite{AlonYusterZwick1995}: using a family of \(r\)-perfect hash functions, we try colorings of \(V(G)\) such that every possible simple \(r\)-vertex path is colorful for some coloring.  For each guessed token order and coloring, we solve the resulting weighted colorful path problem in time \(2^{\mathcal O(r)}|V|^{\mathcal O(1)}\).  If a colorful auxiliary path of weight at most \(b\) exists, it gives a valid discovered \(s\)-\(t\)-path; conversely, any valid solution is found when we guess its token order and a coloring injective on its path vertices.  Since there are at most \(\sum_{r=2}^k k!/(k-r)! \le 2^k k!\) token-order guesses and \(r\le k\), the total running time is \(f(k)\cdot |V|^{\mathcal O(1)}\), proving fixed-parameter tractability.
\end{proof}
}{
\begin{proof}
We consider \textsc{Path Discovery}; the result for \textsc{Shortest Path Discovery} follows from \Cref{obs:SPD=PD}.
Let $G=(V,E(G),w_G)$ and $M=(V,E(M),w_M)$ be directed weighted graphs on the same vertex set $V$, where
$w_G\colon E(G)\to \mathbb Z$ and $w_M\colon E(M)\to \mathbb Z$.

Let $S=\{x_1,\dots,x_k\}\subseteq V$ be the multiset of initial token positions, let $s,t\in V$, and let~$b\in \mathbb Z$.


A solution consists of an integer $r\in\{2,\dots,k\}$, pairwise distinct tokens $y_1,\dots,y_r\in S$,
pairwise distinct vertices $v_1,\dots,v_r\in V$,
such that $v_1=s$, $v_r=t$, $(v_i,v_{i+1})\in E(G)$ for all $i\in\{1,\dots,r-1\}$, and
$\sum_{i=1}^r \dist_M(y_i,v_i)\le b$.
Since unused tokens may be lifted in the end, and there are no negative weights in the movement graph, only the tokens $y_1,\dots,y_r$ matter for the solution.

We guess the participating tokens and their order on the final path.
That is, we guess an ordered list
$Y=(y_1,\dots,y_r)$
of pairwise distinct tokens from $S$, for some $r\in\{2,\dots,k\}$.
The number of such guesses is
$
\sum_{r=2}^k \frac{k!}{(k-r)!}\le 2^k\cdot k!$.
Fix one such guess $Y=(y_1,\dots,y_r)$.

We now construct a layered weighted digraph $U_Y$, called the \emph{unraveling graph}, 
with vertex set 
$
V(U_Y)=\{(v,i)\mid v\in V,\ i\in\{1,\dots,r\}\}$.
The intended meaning of $(v,i)$ is that the $i$-th vertex of the final path is $v$, occupied by token $y_i$.

We add a fresh source $\alpha$ and a fresh sink $\omega$.
We add the arc $\alpha \to (s,1)$
with weight $\dist_M(y_1,s)$, and for every $i\in\{1,\dots,r-1\}$ and every arc $(u,v)\in E(G)$, we add the arc $(u,i)\to (v,i+1)$
with weight $\dist_M(y_{i+1},v)$.
Finally, we add the arc $(t,r)\to \omega$
with weight~$0$.

Every directed $\alpha$-$\omega$ path in $U_Y$ corresponds to a directed walk $v_1=s,v_2,\dots,v_r=t$
in~$G$, and its total weight is exactly $\sum_{i=1}^r \dist_M(y_i,v_i)$.
The problem is that such a walk need not be simple, because the same original vertex $v\in V$ may appear in several layers as $(v,i)$ and~$(v,j)$.

To enforce simplicity without storing all already used vertices explicitly, which would result in a running time $|V|^{\Oof(k)}$, we use the color-coding technique of Alon, Yuster, and Zwick~\cite{AlonYusterZwick1995}.
Fix $r$, and let $\mathcal F_r$ be a family of functions $\lambda\colon V\to [r]$
such that for every set $X\subseteq V$ of size $r$, some $\lambda\in\mathcal F_r$ is injective on $X$. 
We interpret the assignment $\lambda$ as a coloring of $V$. 
Such a family of size $2^{\Oof(r)}\log |V|$ can be constructed in time
$2^{\Oof(r)}\cdot |V|^{\Oof(1)}$~\cite{AlonYusterZwick1995}.

For a fixed $\lambda\in\mathcal F_r$, we call an $\alpha$-$\omega$ path in $U_Y$ \emph{colorful} if the corresponding vertices $v_1,\dots,v_r$ of~$G$ have pairwise distinct colors under $\lambda$.
Every colorful such path corresponds to a simple directed $s$-$t$-path in $G$.

Hence, for fixed $Y$ and $\lambda$, it suffices to determine whether $U_Y$ contains a minimum-weight colorful $\alpha$-$\omega$ path using exactly one vertex from each layer.
This is precisely an instance of the weighted colorful $r$-path problem on a directed graph, and it can be solved in time $2^{\Oof(r)}\cdot |V|^{\Oof(1)}$ by the color-coding machinery~\cite{AlonYusterZwick1995}.

We accept if for some $r$, some ordered token list $Y$, and some coloring $\lambda\in\mathcal F_r$, the minimum weight of a colorful $\alpha$-$\omega$ path in $U_Y$ is at most $b$.

\medskip
We prove correctness.
Assume first that the algorithm accepts.
Then for some $r$, some token order $Y=(y_1,\dots,y_r)$, and some $\lambda\in\mathcal F_r$, there exists a colorful $\alpha$-$\omega$ path in $U_Y$ of weight at most~$b$.
Let $\alpha,(v_1,1),(v_2,2),\dots,(v_r,r),\omega$
be this path.
By construction of $U_Y$, we have $v_1=s$, $v_r=t$, and $(v_i,v_{i+1})\in E(G)$ for all $i\in\{1,\dots,r-1\}$.
Since the path is colorful, the vertices $v_1,\dots,v_r$ are pairwise distinct.
Hence, they form a directed simple $s$-$t$-path in $G$.
Moreover, the weight of the unraveling path is exactly
$\sum_{i=1}^r \dist_M(y_i,v_i)\le b$.
Thus moving token $y_i$ to $v_i$ for each $i$ yields a valid discovery solution.

Conversely, suppose the instance is a yes-instance.
Then there exist $r$, pairwise distinct tokens $y_1,\dots,y_r\in S$, and a directed simple $s$-$t$-path $v_1=s,v_2,\dots,v_r=t$ in $G$ such that
$\sum_{i=1}^r \dist_M(y_i,v_i)\le b$.
When the algorithm guesses the ordered token list $Y=(y_1,\dots,y_r)$, it considers exactly this order.
Since $\{v_1,\dots,v_r\}$ has size $r$, there is some $\lambda\in\mathcal F_r$ that is injective on this set.
Therefore, the corresponding $\alpha$-$\omega$ path
$\alpha,(v_1,1),\dots,(v_r,r),\omega$
in $U_Y$ is colorful, and has exactly this weight. 
Hence, the algorithm accepts.

\medskip
For the running time, for each $r$ there are at most $k!/(k-r)!$ choices of $Y$, and for each such choice we consider a family $\mathcal F_r$ of size $2^{\Oof(r)}\log |V|$ computable in time $2^{\Oof(r)}\cdot |V|^{\Oof(1)}$.
For each pair $(Y,\lambda)$, the weighted colorful $r$-path problem on $U_Y$ is solved in time $2^{\Oof(r)}\cdot |V|^{\Oof(1)}$.
Thus, the total running time is
\[
\sum_{r=2}^k \frac{k!}{(k-r)!}\cdot 2^{\Oof(r)}\cdot |V|^{\Oof(1)}
\le
2^k\cdot k!\cdot 2^{\Oof(k)}\cdot |V|^{\Oof(1)},
\]
which is of the form $f(k)\cdot |V|^{\Oof(1)}$.
Hence, \textsc{Path Discovery} is fixed-parameter tractable parameterized by $k$.
\end{proof}}

\subsection{Tractability of token jumping}

By using that in the token jumping model, we have unit costs to move a token anywhere, we do not have to guess the order of tokens as in \Cref{thm:two-graph-discovery-fpt-k}.
Since the target object is a simple $s$-$t$-path and every extra repeated cycle in a
projected walk can be deleted without increasing the number of newly occupied vertices,
we can avoid the color-coding machinery in the token-jumping case. 

\begin{theorem}
\label{cor:token-jumping-polytime}
\BothPathDiscovery in the token jumping model is solvable in polynomial time.
\end{theorem}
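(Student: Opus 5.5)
By \Cref{obs:SPD=PD}, I will reduce \textsc{Shortest Path Discovery} to \textsc{Path Discovery} on $G^\star$. The token jumping model keeps the movement graph fixed as the complete bi-directed graph with unit costs, so this reduction is valid here. It therefore suffices to solve \textsc{Path Discovery} in polynomial time.

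In the token jumping model, $\dist_M(x,v)=1$ if $x\neq v$ and $0$ if $x=v$. The cost of realizing a path $P$ in $G$ is then the number of vertices of $P$ not already occupied by $S$. This assumes that $P$ has at most $k=|S|$ vertices, since every vertex of $P$ needs its own token. The cost claim holds because each vertex of $V(P)\cap S$ keeps one of its own tokens in place. Every vertex of $V(P)\setminus S$ then receives a distinct token, taken from a vertex outside $V(P)$ or a surplus copy on a stacked vertex, at cost~$1$.

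The instance is therefore a yes-instance if and only if there is a simple $s$-$t$-path $P$ in $G$ with $|V(P)\setminus S|\le b$ and $|V(P)|\le k$. I will phrase this as a shortest-path problem with vertex weights $c(v)=0$ for $v\in S$ and $c(v)=1$ otherwise. A path with total $c$-weight at most $b$ and at most $k$ vertices is what we need.

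The main obstacle is the constraint $|V(P)|\le k$, which is a two-criteria condition. My plan is to handle it by counting vertices in $S$ separately. Write $m=|V(P)\cap S|$ and $j=|V(P)\setminus S|$. The requirement is $j\le b$ and $m+j\le k$. If the multiset $S$ has $k$ tokens on distinct support $|S|_{\text{set}}\le k$, then $m\le |S|_{\text{set}}$ automatically. The condition then amounts to minimizing $j$ subject to $m+j\le k$.

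To do this, I will run a layered dynamic program over pairs (vertex, number of vertices used so far, up to $k\le n$) that minimizes the number of free vertices. This is the unraveling graph of \Cref{thm:two-graph-discovery-fpt-k} with all tokens made interchangeable, so no guess of token order is needed. The DP computes min-cost walks, not paths. I will repair this using the remark preceding the statement: if a walk in the layered graph revisits a vertex, shortcutting the closed subwalk produces an $s$-$t$-walk with fewer layers and no larger count of newly occupied vertices. The count cannot grow because shortcutting only removes visits, and a vertex visited twice was counted at most once in the set of newly occupied vertices.

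For the shortcutting argument to work, the DP cost must be an upper bound on the set count, which holds because it counts visits with multiplicity. Taking an optimal walk of length at most $k$ and shortcutting it repeatedly then yields a simple path with at most $k$ vertices and cost at most the DP value. The DP value is in turn at most the optimum over simple paths, since simple paths are among the walks considered. Comparing the minimum over layers with $b$ decides the instance. The running time is $\Oof(k\cdot |E(G)|)$ after computing $G^\star$ in polynomial time via Bellman–Ford, which handles negative weights without negative cycles.
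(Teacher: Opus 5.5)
Your proposal is correct and takes essentially the same route as the paper. You pass to $G^\star$, use that realizing a simple path $P$ costs exactly $|V(P)\setminus S|$ when $|V(P)|\le k$, and run a shortest-path computation with $0/1$ vertex costs in the $k$-layered unraveling graph. Repeated vertices are removed by shortcutting, which stays within $k$ layers and does not increase the cost. The paragraph on $m$, $j$ and the support of $S$ adds nothing, since the layer count already enforces $|V(P)|\le k$.
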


\shortlongversion{}{
\begin{proof}
    We consider \textsc{Path Discovery}, as the result for \textsc{Shortest Path Discovery} follows from \Cref{obs:SPD=PD}. 
In the token jumping model, the movement graph $M$ is the bidirected clique with unit edge weights.
Therefore, if $P$ is a directed simple $s$-$t$-path, then the minimum cost of moving tokens so as to occupy exactly the vertices of $P$ is
$\mu(P)=|V(P)\setminus S|$,
provided that $|V(P)|\leq |S|=k$; if $|V(P)|>k$, then $P$ cannot be realized since there are only $k$ tokens.

Thus, the identity and order of the participating tokens are irrelevant: only the set of vertices of the final path matters.

As above, we construct a layered directed graph $U$ with vertex set
$V(U)=\{(v,i)\mid v\in V,$ $i\in\{1,\dots,k\}\}\cup\{\alpha,\omega\}$,
where $\alpha$ is a fresh source and $\omega$ a fresh sink.
Again, the intended meaning of $(v,i)$ is that $v$ is chosen as the $i$-th vertex of the final path.

We add the arc $\alpha\to (s,1)$
and the arc $(t,i)\to \omega$ for every $i\in\{1,\dots,k\}$.
For every $i\in\{1,\dots,k-1\}$ and every arc $(u,v)\in E(G)$, we add the arc $(u,i)\to (v,i+1)$.
We assign weight
\[
c(v):=
\begin{cases}
0 & \text{if } v\in S,\\
1 & \text{if } v\notin S.
\end{cases}
\]
to each chosen path vertex $v$.
Accordingly, we give the arc $\alpha\to (s,1)$ weight $c(s)$, every arc $(u,i)\to (v,i+1)$
weight $c(v)$, and every arc $(t,i)\to\omega$ weight $0$.

Then every directed $\alpha$-$\omega$ path in $U$, $\alpha,(v_1,1),(v_2,2),\dots,(v_r,r),\omega$,
corresponds to a directed $s$-$t$-walk
$v_1=s,v_2,\dots,v_r=t$
in $G$ with $r\le k$, and the total weight of the path in $U$ is exactly
$\sum_{j=1}^r c(v_j)$.
This is the number of selected vertices that are not initially occupied by a token.

Since all arc weights in $U$ are non-negative, if the projected walk in $G$ repeats a vertex, then deleting the corresponding cycle does not increase the total weight in $U$.
Hence, there exists an optimal $\alpha$-$\omega$ path in $U$ whose projection to $G$ is a simple $s$-$t$-path.
Therefore, the minimum weight of an $\alpha$-$\omega$ path in $U$ is exactly the minimum movement cost of a realizable $s$-$t$-path in $G$.

Thus, \textsc{Path Discovery} can be solved by a shortest path computation in the layered graph $U$, and hence in polynomial time.
\end{proof}}

\subsection{Parameter bounded solution size}

The color-coding argument from the previous subsection does not only apply when the
number of tokens is bounded. 
The difficulty is that we do not know in advance which of the tokens of $S$ will be used on the
final path, so we cannot simply restrict to a bounded number of tokens. To resolve this, we again apply 
color-coding, this time simultaneously to the path vertices and to the used tokens.

\begin{theorem}\label{thm:param-induced-path}
    \BothPathDiscovery is fixed-parameter tractable when parameterized by any parameter
    that functionally bounds the number of vertices on all simple directed $s$-$t$-paths in~$G$.
\end{theorem}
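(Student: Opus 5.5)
Let $L$ denote the given bound on the number of vertices of any simple directed $s$-$t$-path in~$G$. We reduce \textsc{Shortest Path Discovery} to \textsc{Path Discovery} via \Cref{obs:SPD=PD}. This reduction does not increase the relevant quantity: every $s$-$t$-path of $G^\star$ is a simple $s$-$t$-path of $G$, and therefore still has at most $L$ vertices.

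For \textsc{Path Discovery}, we use the same reformulation as in the proof of \Cref{thm:two-graph-discovery-fpt-k}. A solution consists of a simple path $v_1=s,\dots,v_r=t$ in $G$ with $r\le L$, together with pairwise distinct tokens $y_1,\dots,y_r$ (treating the tokens of the multiset $S$ as labelled objects) such that $\sum_i \dist_M(y_i,v_i)\le b$. The difficulty is that $k$ is not bounded, so we cannot guess the ordered token list $Y$.

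Instead, we apply color-coding twice, independently.
\begin{itemize}
\item A coloring $\lambda\colon V\to[L]$ from an $L$-perfect hash family on the vertices enforces that the path vertices are distinct.
\item A coloring $\kappa\colon S\to[L]$ from an $L$-perfect hash family on the labelled tokens enforces that the used tokens are distinct.
\end{itemize}
Both families have size $2^{\Oof(L)}\log n$ by~\cite{AlonYusterZwick1995}. We try every pair $(\lambda,\kappa)$, and there are $2^{\Oof(L)}\log^2 n$ such pairs.

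For a fixed pair $(\lambda,\kappa)$ we run a dynamic program. For a vertex $v$ and a token color $c\in[L]$, precompute
\[
\mathrm{cost}_c(v)=\min_{x\in S,\ \kappa(x)=c}\dist_M(x,v),
\]
using all-pairs shortest paths in $M$. The DP state is $(v,A,B)$ with $A,B\subseteq[L]$ and $|A|=|B|$. Its value is the minimum cost of a directed $s$-$v$-path in $G$ whose vertices have exactly the pairwise distinct colors $A$ under $\lambda$, with each path vertex charged $\mathrm{cost}_c$ for a distinct token color $c$, where the colors used form $B$.
\begin{itemize}
\item Initialization: $(s,\{\lambda(s)\},\{c\})$ gets value $\mathrm{cost}_c(s)$ for each $c\in[L]$.
\item Transition: along an arc $(u,v)\in E(G)$ with $\lambda(v)\notin A$, choose a token color $c\notin B$ and add $\mathrm{cost}_c(v)$.
\end{itemize}
Processing states by increasing $|A|$, this takes $4^{L}\cdot n^{\Oof(1)}$ time. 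We accept if some state $(t,A,B)$ has value at most $b$.

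Soundness: distinct vertex colors give a simple $s$-$t$-path. Distinct token colors mean that the minimizing tokens for different positions are distinct, so moving each chosen token along a shortest $M$-path realizes the path within cost at most $b$. Completeness: for an optimal solution, some $\lambda$ is injective on $\{v_1,\dots,v_r\}$ and some $\kappa$ is injective on $\{y_1,\dots,y_r\}$. For these colorings, assigning color $\kappa(y_i)$ to position $i$ gives a DP run of cost at most $\sum_i\dist_M(y_i,v_i)$. Note that the DP may also find a cheaper run using other tokens, which is harmless.

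The step I expect to need the most care is token distinctness without guessing the token order. The key point is that distinctness of token \emph{colors} already suffices. We may replace $y_i$ by the cheapest token of its color because tokens of different colors are automatically different, and tokens not on the path are simply lifted at the end, which is free since movement weights are nonnegative. The total running time is $2^{\Oof(L)}\cdot n^{\Oof(1)}$, which proves the theorem.
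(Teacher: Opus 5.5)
Your proof is correct and follows essentially the paper's route: you color-code both the path vertices and the tokens, and you collapse each token-color class to its cheapest member $\min_{\kappa(x)=c}\dist_M(x,v)$, which is the paper's $c_i(v)$. The only difference is the last step: you solve it with a direct DP over pairs of color sets $(A,B)$, whereas the paper guesses a permutation and invokes \Cref{thm:two-graph-discovery-fpt-k} on a compressed instance with $\ell$ abstract tokens, so your version is self-contained, runs in single-exponential time $2^{\Oof(L)}\cdot n^{\Oof(1)}$, and avoids the paper's step of matching vertex colors and token colors through one common permutation (using a single $L$ rather than iterating over $\ell\le L$ needs $|V|,|S|\ge L$, or else a trivial injective coloring).
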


\shortlongversion{}{
\begin{proof}
Let $(G,M,s,t,S,b)$ be an instance of \textsc{Path Discovery}, and let $r$ be an upper bound on the
number of vertices on any simple directed $s$-$t$-path in $G$.
We show fixed-parameter tractability parameterized by $r$. 
The result for \textsc{Shortest Path Discovery} follows by \Cref{obs:SPD=PD}.

\medskip 

Fix an integer $\ell\in\{2,\dots,r\}$. 
We seek a directed simple $s$-$t$-path on exactly $\ell$ vertices.
Again, using the color-coding technique, we compute a family~$\mathcal F_\ell$ of functions
$\lambda_V \colon V(G)\to [\ell]$
of size $2^{\Oof(\ell)}\log |V(G)|$ such that for every set $X\subseteq V(G)$ of size $\ell$, some
$\lambda_V\in\mathcal F_\ell$ is injective on $X$. Likewise, we compute a family $\mathcal H_\ell$
of functions
$\lambda_T \colon S\to [\ell]$
of size $2^{\Oof(\ell)}\log |S|$ such that for every set $Y\subseteq S$ of size $\ell$, some
$\lambda_T\in\mathcal H_\ell$ is injective on $Y$.

Fix $\lambda_V\in\mathcal F_\ell$ and $\lambda_T\in\mathcal H_\ell$.
For every color $i\in[\ell]$ and every vertex $v\in V(G)$, define
$c_i(v):=\min\{\dist_M(x,v)\mid x\in S,\ \lambda_T(x)=i\}$,
where $c_i(v):=+\infty$ if no token of color $i$ can reach $v$.
Intuitively, if the $j$-th vertex of the final path receives vertex-color $i$, then we pay the
minimum cost of assigning to it a token of token-color $i$.

We now guess a permutation $\pi$ of $[\ell]$. 
The intended meaning is that the $j$-th vertex of the path has vertex-color $\pi(j)$, and
is occupied by a token of token-color $\pi(j)$.
For this fixed permutation $\pi$, define a new movement-cost function on $V(G)$ by
$\widehat c_j(v):=c_{\pi(j)}(v)$ for $j\in[\ell]$.
Thus, once $\pi$ is fixed, the cost of using a vertex $v$ in position $j$ is determined.

At this point, we obtain an instance of \textsc{Path Discovery} with exactly $\ell$ tokens:
for each $i\in[\ell]$, introduce one abstract token $y_i$, and let its movement cost to a vertex~$v$ be
$\dist_{M'}(y_i,v):=c_i(v)$.
We may now invoke \Cref{thm:two-graph-discovery-fpt-k} on this
compressed instance with~$\ell$ tokens. 

\medskip

We prove correctness of the construction. 
Suppose first that the algorithm accepts for some choice of $\ell$, $\lambda_V$, $\lambda_T$, and
$\pi$. Then the compressed instance has a feasible discovered $s$-$t$-path
$P=v_1,\dots,v_\ell$
in $G$. 
By the
definition of the costs $c_i(v)$, for each $j\in[\ell]$ there exists a token $x_j\in S$ of
token-color $\pi(j)$ such that
$\dist_M(x_j,v_j)=\widehat c_j(v_j)$.
Because $\lambda_T$ is injective on token-colors, these tokens are pairwise distinct. Hence, we may
assign them to the vertices $v_1,\dots,v_\ell$, obtaining a valid discovery sequence in the original
instance of total cost at most~$b$.

Conversely, suppose that the original instance is a yes-instance. Let
$P=v_1,\dots,v_\ell$
be a discovered simple $s$-$t$-path of minimum movement cost, and let 
$x_1,\dots,x_\ell\in S$
be the pairwise distinct tokens used to occupy the vertices $v_1,\dots,v_\ell$, respectively.
Since $\ell\le r$, by the defining properties of $\mathcal F_\ell$ and~$\mathcal H_\ell$, there exist
$\lambda_V\in\mathcal F_\ell$ and $\lambda_T\in\mathcal H_\ell$ such that $\lambda_V$ is injective on
$\{v_1,\dots,v_\ell\}$ and $\lambda_T$ is injective on $\{x_1,\dots,x_\ell\}$.
Let $\pi\in S_\ell$ be the unique permutation such that
$\lambda_V(v_j)=\pi(j)$ and $\lambda_T(x_j)=\pi(j)$
for every $j\in[\ell]$.
Then for each $j$, $\widehat c_j(v_j)\le \dist_M(x_j,v_j)$.
Hence, the compressed instance constructed from $\ell$, $\lambda_V$, $\lambda_T$, and $\pi$
admits a solution of cost at most the cost of the original solution, and therefore the algorithm
accepts.

\medskip

For the running time, we count as follows. 
For each $\ell\in\{2,\dots,r\}$, the number of pairs of colorings
$(\lambda_V,\lambda_T)\in \mathcal F_\ell\times \mathcal H_\ell$
is $2^{\Oof(\ell)}\log |V(G)| \cdot 2^{\Oof(\ell)}\log |S|
=
2^{\Oof(\ell)}\cdot |V(G)|^{\Oof(1)}$.
For each such pair, we try all $\ell!$ permutations $\pi$, and for each $\pi$ we solve one instance
with exactly $\ell$ tokens using \Cref{thm:two-graph-discovery-fpt-k}.
Hence, the total running time is
\[
\sum_{\ell=2}^{r}
2^{\Oof(\ell)}\cdot \ell!\cdot f(\ell)\cdot |V(G)|^{\Oof(1)}
=
g(r)\cdot |V(G)|^{\Oof(1)}.
\]
Thus, \textsc{Path Discovery} is fixed-parameter tractable parameterized by $r$.
This proves the theorem.
\end{proof}}

We obtain the following two corollaries. 

\begin{corollary}\label[corollary]{crl:param-induced-path}
    \BothPathDiscovery is fixed-parameter tractable when parameterized by  vertex integrity or treedepth of the underlying undirected graph. 
\end{corollary}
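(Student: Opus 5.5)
The plan is to derive the corollary directly from \Cref{thm:param-induced-path}. It suffices to show that vertex integrity and treedepth of the underlying undirected graph each functionally bound the number of vertices on every simple directed $s$-$t$-path in $G$. Since vertex integrity bounds treedepth (vertex integrity $p$ gives treedepth at most $p$), it is enough to handle treedepth. Which underlying graph is meant needs a remark. The theorem needs a bound on paths in $G$, which for \textsc{Shortest Path Discovery} is replaced by $G^\star$ via \Cref{obs:SPD=PD}. We therefore take the parameter on the underlying undirected graph of $G$, or on that of $G\cup M$. In both cases it bounds the same parameter of the underlying graph of $G$, and of $G^\star$, since both parameters are monotone under taking subgraphs.

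The key step is the standard fact that a graph of treedepth $d$ contains no path on more than $2^d-1$ vertices. Concretely, I would take a simple directed $s$-$t$-path $P$ in $G$. Forgetting orientations, it is a simple path in the underlying undirected graph $U$ on $|V(P)|$ vertices. Treedepth is subgraph-monotone, and a path on $n$ vertices has treedepth $\lceil \log_2(n+1)\rceil$. Hence $\lceil\log_2(|V(P)|+1)\rceil \le \mathrm{td}(U)$, so $|V(P)|\le 2^{\mathrm{td}(U)}-1$. For vertex integrity I would also give the direct argument. Take a set $X$ with $|X|+\max_C|C|\le p$ over the components $C$ of $U-X$. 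The path $P$ meets $X$ in at most $|X|$ vertices, and the segments of $P$ between consecutive $X$-vertices lie inside single components. There are at most $|X|+1$ such segments, so $|V(P)|\le |X|+(|X|+1)\cdot \max_C|C|\le p+p^2$.

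With these bounds, $r:=2^{\mathrm{td}(U)}$ (respectively $p^2+p$) is a computable function of the parameter that bounds all simple $s$-$t$-paths. \Cref{thm:param-induced-path} then yields running time $g(r)\cdot|V(G)|^{\Oof(1)}$, which is \FPT\ in the original parameter. The algorithm does not need to compute a decomposition, because the proof of the theorem only iterates over $\ell\le r$. We can even just run it with $\ell$ up to the true maximum and stop at the first success.

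There is no real obstacle here. The only point needing care is the logarithmic path-length bound for treedepth, which I would cite or prove by induction: removing the root of an elimination forest splits the path into at most two pieces, each of treedepth at most $d-1$, which gives $n\le 2(2^{d-1}-1)+1=2^d-1$.
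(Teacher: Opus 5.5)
Your proposal is correct and takes the route the paper intends. The paper states this corollary without further proof, as an immediate consequence of \Cref{thm:param-induced-path}. The implicit point is exactly your observation that bounded treedepth, and hence bounded vertex integrity, bounds the number of vertices on every simple $s$-$t$-path, e.g.\ by $2^{\mathrm{td}}-1$.

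The one imprecise spot is your closing aside about running only ``up to the true maximum.'' That maximum is not known in advance, and on a no-instance the algorithm needs an a priori stopping bound. Taking $r=2^{d}$ from the given parameter value, or computing treedepth in \FPT\ time, settles this.
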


\begin{corollary}\label{obs:spd-fpt-diameter}
\ShortestPathDiscovery when all edge weights of $G$ are positive is fixed-parameter tractable when parameterized by the distance from $s$ to $t$ in $G$.
\end{corollary}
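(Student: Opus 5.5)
The plan is to derive the statement from \Cref{thm:param-induced-path} via the reduction of \Cref{obs:SPD=PD}. Let $D=\dist_G(s,t)$ be the parameter; if $D=\infty$ the instance is negative and we reject. Otherwise we pass to the shortest-path subgraph $G^\star$ of \Cref{obs:SPD=PD}. By that observation, \ShortestPathDiscovery on $G$ is equivalent to \PathDiscovery on $G^\star$, and the directed $s$-$t$-paths of $G^\star$ are exactly the shortest $s$-$t$-paths of $G$. Computing $G^\star$ takes polynomial time: we run single-source shortest-path computations (for instance Bellman--Ford, since there are no negative cycles) from $s$ in $G$ and to $t$ in the reverse of $G$.

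The key step is to bound the number of vertices on every simple directed $s$-$t$-path in $G^\star$ by a function of $D$. Every such path $P$ is a shortest $s$-$t$-path in $G$, so $w_G(P)=D$. Since all weights are positive integers, each edge contributes at least $1$. Hence $P$ has at most $D$ edges and at most $D+1$ vertices. Consequently the parameter $D$ functionally bounds the number of vertices on all simple directed $s$-$t$-paths in $G^\star$, which is exactly the hypothesis of \Cref{thm:param-induced-path}. Applying that theorem to the \PathDiscovery instance $(G^\star,M,s,t,S,b)$ yields running time $g(D+1)\cdot n^{\Oof(1)}$.

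One subtlety needs checking, and I expect it to be the only real obstacle. The statement of \Cref{thm:param-induced-path} is phrased for \BothPathDiscovery with the bound taken over paths of $G$, not $G^\star$. In general $G$ may contain long simple $s$-$t$-paths that are not shortest. So we must apply the theorem to the \PathDiscovery instance on $G^\star$ directly, rather than to the \ShortestPathDiscovery instance on $G$. The proof of that theorem only uses an upper bound $r$ on the vertex count of simple $s$-$t$-paths in the graph it is run on, so invoking it on $G^\star$ with $r=D+1$ is legitimate.

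Positivity is essential here. With zero weights, shortest paths could have unboundedly many vertices, consistent with the hardness announced in the introduction. Integrality of the weights is what gives the bound of at most $D$ edges.
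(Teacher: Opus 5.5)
Your proposal is correct and follows the paper's proof essentially step for step. You pass to the shortest-path subgraph $G^\star$, use positive integer weights to bound every $s$-$t$-path there by $D+1$ vertices, and apply the bounded-path-length theorem to the \textsc{Path Discovery} instance on $G^\star$ rather than on $G$, which is exactly what the paper does.
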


\begin{proof}
Let $(G,M,s,t,S,b)$ be an instance of \textsc{Shortest Path Discovery}, and let
$\delta:=\dist_G(s,t)$.
By \Cref{obs:SPD=PD}, it suffices to solve \textsc{Path Discovery} on the shortest-path
subgraph $G^\star$ of $G$.
Since all edge weights of $G$ are positive integers, every shortest $s$-$t$-path in~$G$ has at most
$\delta+1$
vertices.
Hence, every directed $s$-$t$-path in $G^\star$ has at most $\delta+1$ vertices.
Applying \Cref{thm:param-induced-path} to $G^\star$ therefore yields an
$f(\delta)\cdot |V(G)|^{\Oof(1)}$-time algorithm.
\end{proof}

\subsection{Parameter feedback edge set}
\label{sec:fes}

We now consider the parameter feedback edge set. 
Here, by feedback edge set we mean a feedback edge set of the underlying undirected graph~$U$ of $G$. 
We begin with a standard auxiliary lemma: if the target configuration is fixed, then
its reachability under a budget can be decided in polynomial time.
This subroutine was used already in tractability proofs for solution discovery; see, 
the proof of Theorem~3.3 in~\cite{fellows2023solution}. 
We adapt the lemma for the two-graph setting. 
The proof follows by solving the described matching problem.  

\begin{lemma}\label[lemma]{lem:fixed-target-poly}
Let $G$ and $M$ be graphs on the same vertex set $V$, let $S,T\subseteq V$ (as multisets) be two token
configurations with $|T|\le |S|$, and let $b\in\mathbb N$.
Assume that lifting unused tokens is allowed.
Then one can decide in polynomial time whether the target configuration $T$ can be realized
from $S$ within total movement budget at most $b$.

More precisely, let $H$ be the complete bipartite graph with bipartition $(S,T)$, and assign
to every edge $uv\in S\times T$ the weight
$w(u,v):=\dist_M(u,v)$.
Then $T$ can be realized from $S$ within budget at most $b$ if and only if $H$ admits a
matching that saturates $T$ and has total weight at most $b$. 
This problem can be solved in polynomial time. 
\end{lemma}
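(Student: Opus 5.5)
We prove the equivalence in two directions and then appeal to standard assignment algorithms. The key structural fact is that tokens move independently and movement costs are nonnegative. Hence an optimal discovery sequence can be assumed to move each token along a shortest path in $M$ from its start to its final position, and tokens ending at unused positions may be left in place and lifted.

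For the direction ``matching $\Rightarrow$ realizable'', take a matching $\mu$ saturating $T$ of weight at most $b$. For every pair $(u,v)$ in $\mu$, move the token at $u$ along a shortest $u$-$v$ path in $M$. This path exists since $\dist_M(u,v)<\infty$, because the weight is finite. The moves of different tokens are performed one after another. Stacking is allowed, so no move is ever blocked. Each move is a single edge of $M$, so the total cost is $\sum_{(u,v)\in\mu}\dist_M(u,v)\le b$. All tokens not matched stay where they are and are lifted at the end. Multiset occurrences are handled by treating each copy in $S$ and in $T$ as its own vertex of $H$.

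For the direction ``realizable $\Rightarrow$ matching'', fix a discovery sequence of cost at most $b$ whose final configuration contains $T$. We track the identity of each token through the sequence; every move relocates exactly one token. Each copy in $T$ is occupied at the end by a distinct token, and we assign it the token's start vertex $u\in S$. Distinct copies receive distinct tokens, so this assignment is a matching saturating $T$. The moves made by the token starting at $u$ form a $u$-$v$ walk in $M$, whose cost is at least $\dist_M(u,v)$. The total cost of the sequence is the sum of the costs of all tokens' walks. All weights are nonnegative, so dropping the walks of unmatched tokens does not increase the total. This gives matching weight at most $b$.

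Finally, we compute $\dist_M$ for all pairs in polynomial time, using Dijkstra since the weights are nonnegative. A minimum-weight matching saturating the $T$-side of $H$ is a rectangular assignment problem. We can solve it either with the Hungarian method after padding $T$ with zero-cost dummy copies up to $|S|$, or as a min-cost flow. Infinite distances are treated as absent edges. The only point requiring care is the token-identity bookkeeping in the second direction, which is where the multiset and stacking conventions matter. The rest is routine.
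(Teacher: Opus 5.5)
Your proposal is correct and takes the same approach as the paper. The paper only asserts that the lemma follows by solving the minimum-weight $T$-saturating matching problem in $H$, pointing to the analogous argument of Fellows et al.; your two directions (token tracking plus nonnegativity of $w_M$, and shortest-path moves with stacking) together with the assignment/min-cost-flow step spell out what the paper leaves implicit.
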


Next we show that graphs with small feedback edge set contain only few simple $s$-$t$-paths.

\begin{lemma}\label[lemma]{lem:few-st-paths-fes}
Let $G$ be a graph, let $s,t\in V(G)$, and let $F\subseteq E(U)$ be a feedback
edge set of the underlying undirected graph $U$ of $G$ of size $f:=|F|$.
Then the number of simple $s$-$t$-paths in~$G$ is at most
\[
N(f)\ :=\ \sum_{r=0}^{f} \binom{f}{r}\,r!\,2^r \ \le\ (2f+1)^f .
\]
Moreover, every simple $s$-$t$-path is uniquely determined by the ordered list of feedback
edges it uses together with the directions in which it traverses them.
\end{lemma}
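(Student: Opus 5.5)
The plan is to prove the "moreover" part first and then derive the counting bound from it. Let $T_0 := U - F$, which is a forest. Every simple $s$-$t$-path $P$ in $G$ projects to a simple $s$-$t$-path in $U$, since its vertices are pairwise distinct. Its edges split into maximal segments inside $T_0$, separated by the feedback edges it uses. So $P$ can be written as
\[
s = a_0 \leadsto b_1 \to a_1 \leadsto b_2 \to a_2 \leadsto \cdots \to a_r \leadsto t,
\]
where each $\{b_i,a_i\}\in F$ is traversed from $b_i$ to $a_i$, and each $a_{i-1}\leadsto b_i$ (with $b_{r+1}:=t$) is a walk in $T_0$.

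The key observation is that each such segment is a path in the forest $T_0$, because $P$ is simple. In a forest, two vertices are joined by at most one path. Hence each segment is uniquely determined by its endpoints $a_{i-1}$ and $b_i$, and these are fixed by the ordered list of feedback edges together with their traversal directions.

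One point needs care: the directed path in $G$ is determined because its vertex sequence is determined. Between consecutive vertices there may be several parallel directed edges in $G$, but since $E(G)\subseteq V\times V$, there is at most one arc per ordered pair. So the vertex sequence fixes the path.

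This gives an injective map from simple $s$-$t$-paths to sequences of $r$ distinct elements of $F$, each with one of two orientations. The elements are distinct because a simple path uses each undirected edge at most once. Counting these sequences gives $\sum_{r=0}^{f}\binom{f}{r}r!\,2^r$ for the first bound.

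For the estimate $\le (2f+1)^f$, note that $\binom{f}{r}r!\,2^r = \frac{f!}{(f-r)!}2^r \le f^r 2^r$. Summing over $r$ and bounding each term by $(2f)^r\binom{f}{r}$ gives
\[
\sum_r \binom{f}{r}(2f)^r = (2f+1)^f .
\]
Alternatively, one can view the sequences as words of length $f$ over an alphabet of size $2f+1$ (padding with a blank symbol). This yields the same bound.

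The only mildly delicate step is justifying that the segment between consecutive feedback edges is unique. Here I would just invoke uniqueness of paths in forests, and note that if no such forest path exists the sequence simply fails to correspond to any path. That only helps the upper bound.
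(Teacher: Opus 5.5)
Your proposal is correct and follows essentially the same route as the paper: encode each simple $s$-$t$-path by its ordered, oriented sequence of feedback edges, use uniqueness of paths in $U-F$ between consecutive feedback edges to get injectivity, and count signatures via $\binom{f}{r}r!\,2^r\le\binom{f}{r}(2f)^r$ and the binomial theorem. Working with the forest $U-F$ is slightly more careful than the paper's proof, which asserts that $U-F$ is a tree after its w.l.o.g.\ reduction; this assertion fails in general, but only uniqueness of paths in a forest is needed.
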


\shortlongversion{}{
\begin{proof}
We may assume without loss of generality that every edge of $G$ lies on an $s$-$t$-path (as all other edges may be deleted). 
Let $U$ be the underlying undirected graph of $G$ and let $T:=U-F$ as an undirected graph. 
By our assumption, $T$ is an undirected tree.
From now on, slightly misusing notation, we consider $F$ as the set $\{(u,v),(v,u) \mid \{u,v\}\in F\}$. 

For a simple $s$-$t$-path $P$ in $G$, define its \emph{signature} $\sigma(P)$ as follows.
Let $F(P):=E(P)\cap F$ be the set of feedback edges used by $P$; since $P$ is simple, every
edge is used at most once.
List the edges of $F(P)$ in the order in which they occur along $P$, say
$e_1,\ldots,e_r$, and for each $i\in[r]$ record the direction in which $P$ traverses $e_i$,
say this direction is $x_i y_i$.
Thus,
\[
\sigma(P)=\bigl((e_1=x_1y_1),\ldots,(e_r=x_ry_r)\bigr).
\]

We claim that if $P$ and $P'$ are simple $s$-$t$-paths with $\sigma(P)=\sigma(P')$, then
$P=P'$.

Indeed, write the common signature as $(e_1,\ldots,e_r)$ with orientations
$e_i=x_i y_i$.
Set $p_0:=s$, $q_{r+1}:=t$, and for $i\in[r]$ let $q_i:=x_i$ and $p_i:=y_i$.
Then $P$ consists of:
\begin{itemize}
    \item the unique $T$-path from $p_0=s$ to $q_1=x_1$,
    \item the edge $e_1$,
    \item the unique $T$-path from $p_1=y_1$ to $q_2=x_2$,
    \item the edge $e_2$,
    \item $\dots$,
    \item the unique $T$-path from $p_r=y_r$ to $q_{r+1}=t$.
\end{itemize}

Between two consecutive feedback edges used by $P$,
the path uses no edge of $F$, and hence it lies entirely in $T$. Since $T$ is a tree, the
corresponding simple path is unique. The same argument applies to $P'$, so $P$ and $P'$
coincide on every $T$-segment and on every feedback edge. Hence, $P=P'$, proving the claim.

Thus, the map $P\mapsto \sigma(P)$ is injective, and the number of simple $s$-$t$-paths is at
most the number of possible signatures.

To count signatures, choose $r\in\{0,\ldots,f\}$, choose the $r$ feedback edges used by the path
(at most $\binom{f}{r}$ choices), order them (at most $r!$ choices), and orient each of them
(exactly $2^r$ choices). Therefore, the number of signatures is at most
\[
\sum_{r=0}^{f} \binom{f}{r}\,r!\,2^r
\le
\sum_{r=0}^{f} (2f)^r
\le
(2f+1)^f,
\]
which proves the stated bound.
\end{proof}}

As a corollary, graphs of small feedback edge set admit fixed-parameter algorithms
for both \textsc{Path Discovery} and \textsc{Shortest Path Discovery}, as shown next.

\begin{theorem}\label{thm:path-fes}
\BothPathDiscovery{} is fixed-parameter tractable parameterized by
the feedback edge set number of the underlying undirected graph of the problem graph $G$.
\end{theorem}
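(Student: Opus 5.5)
The plan is to enumerate all simple $s$-$t$-paths of $G$ and, for each one, decide by \Cref{lem:fixed-target-poly} whether it can be realized within budget. \Cref{lem:few-st-paths-fes} bounds the number of such paths by a function of $f$ alone, so this gives an \FPT{} algorithm.

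\textbf{Reduction to \PathDiscovery.} By \Cref{obs:SPD=PD} it suffices to treat \PathDiscovery. Passing from $G$ to $G^\star$ only deletes edges and keeps all vertices. Hence the underlying undirected graph of $G^\star$ is a subgraph of $U$, and $F\cap E(U(G^\star))$ is a feedback edge set of it of size at most $f$. The parameter therefore does not increase.

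\textbf{Computing $F$ and enumerating signatures.} A minimum feedback edge set of $U$ is computed in polynomial time: take a spanning forest in each component and let $F$ be the remaining edges. Then $f=|E(U)|-|V(U)|+c(U)$, where $c(U)$ is the number of components. Following the proof of \Cref{lem:few-st-paths-fes}, I enumerate all signatures. A signature is an ordered list of $r\le f$ distinct feedback edges, each with an orientation $x_iy_i$.

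For each signature I reconstruct the candidate walk. It consists of the unique path in the forest $U-F$ from $s$ to $x_1$, then the edge $(x_1,y_1)$, then the forest path from $y_1$ to $x_2$, and so on, ending with the forest path from $y_r$ to $t$. If some forest path does not exist, the signature is discarded. Otherwise I check three things:
\begin{itemize}
\item every consecutive pair on the walk is an arc of $G$ in the correct direction;
\item all vertices on the walk are distinct;
\item the walk has at most $|S|$ vertices.
\end{itemize}
The lemma assumes every edge lies on an $s$-$t$-path and that $U-F$ is a tree. I avoid both assumptions by working with the forest $U-F$ and discarding signatures whose segments are undefined. Every genuine simple $s$-$t$-path still arises from its own signature, because its segments between feedback edges use only forest edges and are therefore the unique forest paths. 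Each check takes polynomial time, and there are at most $(2f+1)^f$ signatures.

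\textbf{Budget check and correctness.} For every surviving path $P$, I set $T:=V(P)$ and apply \Cref{lem:fixed-target-poly}. This amounts to a min-cost matching saturating $T$ in the complete bipartite graph between $S$ and $T$, with weights $\dist_M$ computed by Dijkstra (valid since movement weights are nonnegative). The algorithm accepts if some $P$ has matching cost at most $b$.

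Correctness holds because a configuration containing $V(P)$ is reachable within budget $b$ if and only if $V(P)$ itself is reachable within budget $b$ with the unused tokens lifted. This uses nonnegative movement costs and the convention allowing lifting. The total running time is $(2f+1)^f\cdot |V|^{\Oof(1)}$.

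\textbf{Main obstacle.} The main subtle point is making the signature enumeration constructive without the normalization used in \Cref{lem:few-st-paths-fes}, namely deleting edges not on $s$-$t$-paths. That preprocessing is not obviously polynomial for simple paths, so I do not rely on it. Instead I work with the forest directly and simply discard signatures whose reconstruction fails or yields a non-simple walk, as described above.
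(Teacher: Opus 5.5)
Your proof is correct and follows the paper's argument. Enumerate the at most $(2f+1)^f$ signatures from \Cref{lem:few-st-paths-fes}, reconstruct and filter the candidate simple directed $s$-$t$-paths, test each fixed target $V(P)$ with the matching of \Cref{lem:fixed-target-poly}, and handle \ShortestPathDiscovery{} via $G^\star$, whose feedback edge set number is no larger. Working directly with the forest $U-F$ and discarding signatures whose segments are undefined or fail the direction/simplicity checks is a harmless, slightly cleaner refinement: it avoids the paper's normalization that every edge lies on an $s$-$t$-path, which signature injectivity does not actually need.
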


\shortlongversion{}{
\begin{proof}
Let $(G,M,s,t,S,b)$ be an instance of \textsc{Path Discovery}. 
\textsc{Shortest Path Discovery} follows by \Cref{obs:SPD=PD}. 
We may assume that every edge of $G$ lies on an $s$-$t$-path (as all other edges may be deleted) and that there exists an $s$-$t$-path, as otherwise we may immedately return false. 
Let $k:=|S|$ and let $d:=\dist_G(s,t)$.
Let $F\subseteq E(U)$ be a minimum feedback edge set of size $f$, where $U$ is the underlying undirected graph of $G$, which can be computed in time $\Oof(n+m)$. 
Let $T:=U-F$, which is an undirected tree.

By \Cref{lem:few-st-paths-fes}, the number of simple $s$-$t$-paths in $G$ is at most $N(f)$,
and each such path is uniquely determined by its signature.
Hence, we can enumerate all candidate signatures
$\sigma=\bigl((e_1=x_1y_1),\ldots,(e_r=x_ry_r)\bigr)$
with pairwise distinct $e_i\in F$, and for each signature reconstruct the corresponding
$s$-$t$-walk by concatenating the unique $T$-paths between the prescribed endpoints and
the prescribed feedback edges.
We keep exactly those signatures that yield a simple $s$-$t$-path $P$ with
$|V(P)|\le k$ (and with weight $d$, respectively).
These are precisely the candidate target (shortest) paths that can potentially be realized from $S$.

For each such path $P$, let $T_P:=V(P)$.
Since the target configuration is now fixed and satisfies $|T_P|\le |S|$, we can decide in
polynomial time, using \Cref{lem:fixed-target-poly}, whether $T_P$ can be realized from $S$
within total movement budget at most $b$ in the two-graph model.
If for some candidate path $P$ this check succeeds, we answer \textsc{YES}; otherwise we
answer \textsc{NO}.

The number of candidate signatures depends only on $f$, and each corresponding check runs in
polynomial time. Therefore, the overall running time is $g(f)\cdot n^{O(1)}$ for some computable
function $g$, and the problem is fixed-parameter tractable parameterized by $f$.
\end{proof}}

\subsection{Parameter treewidth}

Finally, we show that \textsc{Path Discovery} and \textsc{Shortest Path Discovery} are XP when parameterized by treewidth. 
Here, for an instance $(G,M,w_G,w_M,s,t,S,b)$, we measure the treewidth in the graph $U:=\bigl(V(G),E^{\mathrm{und}}(G)\cup E^{\mathrm{und}}(M)\bigr)$, which is the undirected union of $G$ and $M$. 
Our proof is a variation of the proof for \textsc{Vertex Cover Discovery} parameterized by treewidth~\cite{GroblerEtAlICALP24}. 
\shortlongversion{}{
We give a short proof overview and refer to~\cite{GroblerEtAlICALP24} for undefined notation. }

\begin{theorem}\label{thm:path-treewidth-xp}
    \BothPathDiscovery{} can be solved in time $n^{\Oof(tw)}$, hence, it is in~\XP{} when parameterized by $t$.
\end{theorem}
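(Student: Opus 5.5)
We work with \textsc{Path Discovery} throughout; the result for \textsc{Shortest Path Discovery} follows from \Cref{obs:SPD=PD}, because passing to $G^\star$ only deletes problem-graph edges and hence does not increase the treewidth of $U$. First we compute a nice tree decomposition of $U$ of width $\Oof(tw)$ in time $n^{\Oof(tw)}$ (an exact one is fine in this time bound) and add the vertices $s$ and $t$ to every bag.

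A solution is a simple $s$-$t$-path $P$ in $G$ together with an injective assignment of tokens to $V(P)$. We pay $\dist_M(x,v)$ when token $x$ is sent to $v$, and by \Cref{lem:fixed-target-poly} we may restrict to such assignments. Since $\dist_M$ is a global quantity, we realise each movement as a token route inside $M$. As in the vertex cover discovery algorithm of~\cite{GroblerEtAlICALP24}, we decompose the solution as a flow. Every token that ends on $P$ travels along a path in $M$, and a minimum-cost assignment corresponds to an integral min-cost flow in $M$. This flow has sources at $S$ and sinks at $V(P)$, each sink having demand~$1$. Its cost is $\sum_{(u,v)} w_M(u,v)\cdot f(u,v)$.

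We run a dynamic program over the decomposition. For a node with bag $B$, a state records the following data for each $v\in B$:
\begin{enumerate}
\item whether $v\in V(P)$;
\item the degree pattern of $v$ in $P$, namely its in- and out-degree in $P$ restricted to the problem edges already introduced, each in $\{0,1\}$;
\item a connectivity partition of the path pieces in $B$ (the standard Hamiltonian/Steiner-path technique, with $n^{\Oof(tw)}$, or even $tw^{\Oof(tw)}$, partitions), used to avoid closed cycles and to ensure that the final structure is one $s$-$t$-path;
\item the net flow excess of $v$ with respect to the forgotten part, an integer in $[-n,n]$.
\end{enumerate}
The value of a state is the minimum flow cost on the already introduced $M$-edges. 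Flow conservation is checked when a vertex is forgotten: the total excess must equal $[v\in S]\cdot\mathrm{mult}(v)$ minus the amount of flow that terminated at $v$, where termination means $[v\in V(P)]$ and lifted tokens may simply stop anywhere. Since each edge $f(u,v)$ is bounded by $k\le n$, recording excesses per bag vertex gives $n^{\Oof(tw)}$ states. Introduce-edge nodes choose a flow value on the $M$-edge in $\{0,\dots,n\}$, and introduce $G$-edges into $P$ subject to the degree constraints. Join nodes add the excesses and combine degrees and partitions as usual. At the root we accept if some valid state has cost at most $b$.

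The main obstacle is the correctness of the flow encoding. Tokens move along directed $M$-paths with nonnegative costs, and multiple tokens may share edges. Hence the total cost of any token relocation equals the cost of a flow, and conversely any integral flow decomposes into token paths (cycles can be dropped since costs are $\geq 0$). This gives equivalence with $\sum \dist_M(y_i,v_i)\le b$. The delicate bookkeeping is to make the excess bounded per vertex so that the state count stays $n^{\Oof(tw)}$, and to make the $P$-connectivity partition merge correctly at joins, with $s$ of out-degree $1$, $t$ of in-degree $1$, and a single component at the root. Both are standard adaptations, so the overall running time is $n^{\Oof(tw)}$.
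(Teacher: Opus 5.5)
Your proposal is correct and takes essentially the same route as the paper: a dynamic program over a nice tree decomposition of the undirected union of $G$ and $M$, whose states pair a path interface (membership, in/out-degree pattern, and connectivity of the partial segments) with a per-bag-vertex token balance bounded by the number of tokens, which gives $n^{O(\mathrm{tw})}$ states. Your explicit min-cost-flow formulation with edge-introduce nodes, and your remark that passing to $G^\star$ does not increase the treewidth of the union, make precise what the paper's sketch calls the balance function $\beta$ and the reduction to $G^\star$.
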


\shortlongversion{
    \begin{proof}
        [Proof Sketch.]
        The algorithm performs a bottom-up dynamic program over a nice tree decomposition of the undirected union of the problem graph and the movement graph.  
        At each bag, the state records, first, the interface of the partial directed \(s\)-\(t\)-path with the separator, including local path roles and the pairing of open path endpoints, and second, a balance function describing how many tokens must still enter or may leave the processed part through each separator vertex.  
        Since both the path interface and the token balance have only \(n^{\Oof(t)}\) possibilities per bag and all introduce, forget, and join transitions are local, the dynamic program runs in time \(n^{\Oof(t)}\), yielding an XP algorithm.
    \end{proof}
}{
\begin{proof}[Proof sketch.]
We consider the case of \textsc{Path Discovery}, as \textsc{Shortest Path Discovery} again follows by considering $G^\star$. 
Let $k:=|S|$. 
We compute a nice tree decomposition $(T,(X_z)_{z\in V(T)})$ of the undirected graph $U$ of width~$\Oof(t)$ in time $2^{\Oof(t)}\cdot n^2$.
Fix a root $r\in V(T)$. For a node~$z$, let $T_z$ be the subtree rooted at~$z$, and
let $V_z:=\bigcup_{x\in V(T_z)} X_x$.
We write $G_z:=G[V_z]$ and $M_z:=M[V_z]$.

The dynamic program processes the decomposition bottom-up. For every bag $X_z$,
we store how a partial directed $s$-$t$-path inside $G_z$ meets the separator
$X_z$, and how tokens are routed inside $M_z$ towards the vertices of this partial
path.

\medskip
\noindent
\textbf{Path interface.}
A partial solution inside $V_z$ induces a collection of pairwise vertex-disjoint directed path segments in $G_z$ whose vertices outside the bag lie in $V_z\setminus X_z$, and whose endpoints lie in $X_z \cup \{s, t\}$.
Since the final object is a single directed $s$-$t$-path, such a partial solution is fully described on the separator by the following information.

For every $v\in X_z$, we record how the partial path inside $V_z$ is incident with $v$: namely, whether no path edge of the directed path is incident with $v$, exactly one outgoing path edge of the partial path is incident with $v$, exactly one incoming path edge of the partial path is incident with $v$, both one incoming and one outgoing path edge of the partial path are incident with $v$, or $v$ is already declared to lie on the directed path solution, although no incident path edge has yet appeared inside $G_z$.
In addition, we store a directed pairing on those vertices of $X_z$ that are endpoints of partial directed path segments, representing how these endpoints are connected by the partial directed path inside $G_z$.
Since $|X_z|=\Oof(t)$, the number of such path-interface types is bounded by a function of $t$, in fact by $t^{\Oof(t)}$.

\medskip
\noindent
\textbf{Token interface.}
Let $P_z$ be the set of path vertices contained in $V_z\setminus X_z$, and let $Q_z:=V(P)\cap X_z$ be the path vertices inside the bag $X_z$.
To realize the partial solution, tokens from $S\cap V_z$ may be moved inside $M_z$ to vertices of $P_z$ and possibly to some vertices of $Q_z$. 
Since every interaction between $V_z$ and other vertices passes through $X_z$, from the viewpoint of the unprocessed part it is enough to know, for every separator vertex $v\in X_z$, the net number of tokens that leave the processed part through $v$ or that still have to enter through~$v$. We therefore store a balance function
\[
\beta:X_z\to\{-k,-k+1,\dots,k\}.
\]
Here $\beta(v)>0$ means that after occupying all directed path vertices in $V_z$, the processed part provides $\beta(v)$ surplus tokens at $v$, whereas $\beta(v)<0$ means that the processed part still needs $-\beta(v)$ tokens to enter through $v$. 
The number of possible balance functions is at most $(2k+1)^{|X_z|}\le n^{\Oof(t)}$.

\medskip
\noindent
\textbf{DP table.}
For every node $z$, every path-interface type $\tau$ on $X_z$, and every balance function~$\beta$, we define
\[
\mathrm{DP}_z(\tau,\beta)\in \mathbb{Z}\cup\{\infty\}
\]
to be the minimum total movement cost inside $M_z$ among all partial solutions in $V_z$ that
\begin{itemize}
    \item induce the path-interface type $\tau$ on $X_z$, and
    \item induce the token balance $\beta$ on $X_z$.
\end{itemize}
If no such partial solution exists, we set $\mathrm{DP}_z(\tau,\beta):=\infty$.

\medskip
\noindent
\textbf{Correctness of the state description.}
The path-interface type $\tau$ is sufficient to summarize the \emph{path structure} of a solution at the separator $X_z$. 
Indeed, $X_z$ separates $V_z\setminus X_z$ from the rest of the graph, and therefore any continuation of the partial path solution outside $V_z$ can interact with the already processed part only through vertices of $X_z$. 
Hence, for the path part of the dynamic program, it suffices to record which vertices of $X_z$ lie on the directed path solution, what their local roles are (unused, isolated used, outgoing boundary, incoming boundary, internal) with respect to the partial path, and how the endpoint vertices of the partial directed path segments are connected inside $V_z$.

The balance function $\beta$ is sufficient to summarize the \emph{token-routing part} of a solution at the separator $X_z$.
Indeed, since $X_z$ separates $V_z\setminus X_z$ from the vertices of the rest of the graph and all token slides that use vertices of $V_z\setminus X_z$ are also fully contained in $M_z$, the only information needed by the outside of $X_z$ is how many tokens are available as a surplus or how many are still required at each separator vertex to enter to $V_z$. 
Tokens are indistinguishable, so their precise internal identities are irrelevant once the minimum movement cost has been fixed.

\medskip
\noindent
\textbf{Transitions.}
The table is filled in the standard way over the nice tree decomposition.

At a leaf node, the bag is empty. 
Hence, there is only one possible path-interface type and one possible balance function, namely the empty ones, and the corresponding DP value is computed directly.

At an introduce node, one new vertex $u$ enters the bag. 
To compute a parent state from a child state, we guess how the partial path is incident with $u$ (unused, isolated used, outgoing boundary, incoming boundary, internal).
At the same time, when applicable, we guess which path edges between $u$ and vertices already present in the bag are used. 
These choices may also change the roles of previously present bag vertices adjacent to $u$. 
In addition, we guess how the directed pairing on the endpoint vertices is updated.
We keep exactly those choices that satisfy the local degree constraints of a directed path and yield an update of the directed pairing on the endpoint vertices consistent with the chosen incidences of $u$ and with the child state.
The token-routing information is guessed analogously by accounting for the contribution of $u$ to the token supply and demand inside the processed part, and the corresponding cost is updated accordingly. 
Since every edge of $G$ or $M$ with one endpoint~$u$ and the other endpoint already processed has its other endpoint in the current bag, all the resulting checks are local to the bag.
For each fixed resulting state, we store only the minimum cost of a partial solution realizing this state. 
This is safe because the unprocessed part interacts with the processed part only through the information encoded by the state; hence two partial solutions with the same state are equivalent for all future transitions, and only the cheaper one needs to be retained.

At a forget node, one vertex $u$ leaves the bag. 
Since $u$ will no longer belong to the separator, the unprocessed part can no longer interact with the processed part through $u$. 
Hence, we keep exactly those child states in which the role of $u$ is already fully resolved: if $u$ lies on the partial path, then all path incidences required at $u$ have already been realized inside the processed part; similarly, the token-routing requirements at $u$ must already be settled inside the processed part, that is, the balance at $u$ must be zero before $u$ is forgotten. 
We then project the child state to the smaller bag by deleting the information associated with $u$.

At a join node, the two children have the same bag $X_z$ and correspond to two disjoint processed subgraphs outside the separator $X_z$. 
For each possible path-interface type of a join node, we therefore consider all pairs of child states whose path information is jointly feasible on $X_z$, that is, such that for every $v\in X_z$, the path incidences contributed by the two children combine to a valid local path type consistent with the parent state and do not violate the local degree constraints of a directed path.
The corresponding parent balance is obtained by pointwise addition of the two child balances, and the corresponding cost is the sum of the two child costs. 
For each resulting state $(\tau,\beta)$, we retain only the minimum cost over all compatible pairs of child states producing it.

At a join node, we consider all pairs of child states for each parent path-interface type $\tau$, of which there are $t^{\Oof(t)}$. 
Since each child bag has at most $n^{\Oof(t)}$ states, the number of pairs of child states is at most $n^{\Oof(t)}$.
For each pair,  the compatibility checks and balance update are carried out in time $n^{\Oof(t)}$. 
Hence, the running time at a join node is $n^{\Oof(t)}$.

\medskip
\noindent
\textbf{Acceptance.}
At the root $r$, there is no outside part.
Hence we accept if there exists a state $(\tau,\beta)$ such that $\tau$ describes one complete directed $s$-$t$-path and no other path segment, $\beta\equiv 0$, and $\mathrm{DP}_r(\tau,\beta)\le b$.
By construction, such a state exists if and only if there is a directed $s$-$t$-path in $G$ that can be realized from $S$ with movement cost at most $b$ in $M$.

\medskip
\noindent
\textbf{Running time.}
Computing the tree decomposition takes time $2^{\Oof(t)}\cdot n^2$. 
For a bag of size $w=\Oof(t)$, the number of path-interface types is $w^{\Oof(w)}$, and the number of balance functions is $(2k+1)^w$. 
Hence, the number of states per bag is bounded by $w^{\Oof(w)}\cdot (2k+1)^w = n^{\Oof(t)}$.
A nice tree decomposition has $\Oof(n)$ bags, and each transition can be processed in $n^{\Oof(t)}$.
Therefore, the overall running time is $n^{\Oof(t)}$, proving the theorem.

The \textsc{Shortest Path Discovery} case follows by restricting the problem graph to the shortest-path subgraph $G^\star$ of $G$, as before. 
\end{proof}}

\section{Hardness results}
\label{sec:hardness}

In this section, we collect our hardness results. We first give a direct NP-hardness
reduction that simultaneously works for \textsc{Path Discovery} and \textsc{Shortest Path Discovery} even in the classical undirected unweighted model, and then explain how the
same construction yields hardness on several restricted graph classes and for several
structural parameters. After that, we incorporate the weighted two-graph hardness
constructions for the diameter and budget parameters.

\subsection{Source problem: \textnormal{\textsc{Circulating Orientation}}}

We reduce from \textsc{Circulating Orientation}, which is defined as follows.

\begin{definition}[\textnormal{\textsc{Circulating Orientation}}]
Let $G=(V,E)$ be an undirected multigraph and let
$w \colon E \to \mathbb{N}$ be an edge-weight function.
For an orientation $D$ of $G$ let $A(D)$ denote the corresponding set of directed arcs. For a vertex $v\in V$, define the
\emph{weighted indegree} of $v$ by
\[
\indeg_w^D(v)\coloneqq\sum_{(u,v)\in A(D)} w(\{u,v\})
\]
and similarly the \emph{weighted outdegree} of $v$ by
\[
    \out_w^D(v)\coloneqq\sum_{(v,u)\in A(D)} w(\{v,u\}).
\]

A \emph{circulating orientation} of $(G,w)$ is an orientation $D$ of $G$
such that
\[
\out_w^D(v)=\indeg_w^D(v)
\qquad\text{for every }v\in V.
\]

Equivalently, $D$ is a circulating orientation if and only if
\[
\out_w^D(v)=\frac{1}{2}\sum_{e \ni v} w(e)
\qquad \text{for every } v \in V.
\]

The problem \textnormal{\textsc{Circulating Orientation}} asks whether a given weighted multigraph~$(G,w)$ admits a circulating orientation.
\end{definition}

The problem is also known as \textsc{Flow Orientation}. 
It is NP-hard on planar graphs~\cite{DidimoLiottaPatrignani2019HVPlanarity}. 
Moreover, the reduction of Jansen et al.~\cite{Jensen2023Planarity} shows W[1]-hardness parameterized by pathwidth even for planar instances.

\subsection{NP-hardness on planar graphs even in the undirected one-graph model}

We now give an NP-hardness reduction for \textsc{(Shortest) Path Discovery} even in the classical undirected unweighted
token-sliding model from \textsc{Circulating Orientation} on planar graphs.
Thus, throughout this subsection, the problem graph and the movement graph coincide, the edge relation is symmetric, and all edges have unit cost. 
For simplicity we treat the graphs as undirected graphs. 

Intuitively, we will create a path that crosses all edges of our original graph twice.
We introduce gadgets that force the choice of subpaths in a way that is equivalent to choosing an edge orientation for the original problem.
All this is done while preserving planarity.

We need the following auxiliary lemma.

\begin{lemma}\label[lemma]{lem:planar-spine}
Let $G$ be a connected plane graph.
Then one can compute in polynomial time a Jordan curve $\gamma$ such that
\begin{enumerate}
    \item for every edge $e\in E(G)$, the curve $\gamma$ intersects the relative
    interior of $e$ in exactly two points;
    \item $\gamma$ is disjoint from $V(G)$; and
    \item $\gamma$ has no other intersection with the drawing of $G$.
\end{enumerate}
Consequently, after subdividing every edge of $G$ at the two intersection points
with $\gamma$, adding edges between consecutive subdivision vertices along
$\gamma$ preserves planarity and makes the new subdivision vertices induce a cycle.
Deleting one edge of this cycle yields a path $P$ that crosses every edge exactly twice.
\end{lemma}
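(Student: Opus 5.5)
The natural route is through the dual graph. Let $G^*$ be the plane dual of $G$; it may have loops and parallel edges. Each edge $e\in E(G)$ corresponds to a dual edge $e^*$ that crosses $e$ exactly once, in its relative interior, and meets the drawing of $G$ nowhere else. Since $G$ is connected, $G^*$ is connected as well. Take a spanning tree $T^*$ of $G^*$. Doubling every edge of $G^*$, not only those of $T^*$, gives an Eulerian connected multigraph, and the closed walk we want should traverse each dual edge exactly twice. The cleanest way to get this is to walk around the boundary of a thin neighbourhood of the dual graph, as explained next.

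Concretely, draw $G^*$ in the plane so that each dual vertex $f^*$ lies inside its face $f$ and each $e^*$ crosses $e$ once. Let $N$ be a small closed regular neighbourhood of the drawing of $T^*$. Because $T^*$ is a tree, $N$ is a closed disk and its boundary $\partial N$ is a Jordan curve. Now add the non-tree dual edges $e^*$, for $e\notin E(T^*)$, as thin ``fingers'': push a narrow strip of $N$ along $e^*$ from one endpoint face up to and across $e$, and stop just before the other endpoint face. Each strip is attached to $N$ along one short arc, so the union stays a disk, and its boundary $\gamma$ is still a Jordan curve.

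Next, check the three properties. Every tree edge $e^*$ is covered by a strip of $N$, and each long side of that strip crosses $e$ exactly once, so $\gamma$ meets $e$ in two points. Every non-tree edge is covered by a finger whose two sides each cross $e$ once, so again $\gamma$ meets $e$ in two points. Choosing the neighbourhood thin enough ensures that $\gamma$ avoids $V(G)$ and meets the drawing of $G$ only in these crossings. The reason is that the dual drawing stays inside the faces except at the crossing points, and primal vertices lie at positive distance from it.

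Loops in $G^*$ need care; these come from bridges of $G$. If $e^*$ is a loop, it is never a tree edge. Its finger starts at $f^*$, crosses $e$, and ends inside the same face $f$ on the other side of $e$. This is still a simple strip, provided it is routed so that it does not cross the other strips. We can arrange that by following the cyclic order of dual edges around each dual vertex.

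For polynomial-time computability, everything is combinatorial. Compute a planar embedding as a rotation system, build the dual, compute the spanning tree, and read off $\gamma$ as the cyclic sequence of crossed primal edges. This sequence is an Euler-tour-like traversal of $T^*$ with excursions along the fingers. The ``consequently'' part is then immediate. Subdivide each edge at its two crossing points, and join consecutive subdivision vertices along $\gamma$; the new edges follow the arcs of $\gamma$, which are disjoint from $G$ apart from those points, so planarity is preserved and the new vertices induce a cycle. Deleting one edge of this cycle gives the path $P$. The main obstacle I expect is making the finger and loop construction rigorous: showing that the union stays a disk with a simple boundary. The fallback is to phrase it as the boundary of a thickening of the connected plane graph $T^*\cup\{\text{half-edges}\}$ that contains no cycle, which is a tree, so its boundary is a single Jordan curve.
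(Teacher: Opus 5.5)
Your proposal is correct and follows essentially the same route as the paper. Both arguments take a spanning tree $T^*$ of the dual and the boundary of a thin regular neighbourhood of its drawing, which is a disk because $T^*$ is a tree. Both then add, for each non-tree dual edge (including the loops that come from bridges of $G$), a thin finger or detour that crosses $e$ and turns back, so the curve stays a Jordan curve meeting every edge exactly twice. Routing each finger along its own dual edge $e^*$, or equivalently thickening the tree $T^*$ together with those partial arcs, is a slightly more concrete way of getting the pairwise disjointness of the detours that the paper only asserts.
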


The construction is illustrated in \Cref{fig:skeleton}.

\begin{figure}[h!]
    \centering
    \begin{tikzpicture}[baseline=(v11),mainvertex/.style={draw, circle, minimum height=1.2em, inner sep=0em},
    pathvertex/.style={draw=darkgreen, circle, minimum width=0.5em, inner sep=0em,fill=darkgreen},
    treevertex/.style={draw=darkred, circle, minimum width=0.25em, inner sep=0em,fill=darkred},
    token/.style={draw,rectangle, minimum height=0.25em,  minimum width=0.25em, fill=black, inner sep=0em},longpath/.style={dashed},
    edge/.style={draw},
    treeedge/.style={draw,thick,darkred},
    pathedge/.style={draw,thick,darkgreen},
    xscale=0.4, yscale=0.4]

    \node[mainvertex] (v11) at (0,0) {};

    \node[mainvertex] (v12) at (4,0) {};

    \node[mainvertex] (v13) at (8,0) {};

    \node[mainvertex] (v21) at (0,4) {};

    \node[mainvertex] (v22) at (4,4) {};

    \node[mainvertex] (v23) at (8,4) {};

    \node[mainvertex] (v31) at (0,8) {};

    \node[mainvertex] (v32) at (4,8) {};

    \node[mainvertex] (v33) at (8,8) {};

    \draw[edge] (v11) -- (v12);
    \draw[edge] (v12) -- (v13);

    \draw[edge] (v21) -- (v22);
    \draw[edge] (v22) -- (v23);

    \draw[edge] (v31) -- (v32);
    \draw[edge] (v32) -- (v33);

    \draw[edge] (v11) -- (v21);
    \draw[edge] (v21) -- (v31);

    \draw[edge] (v12) -- (v22);
    \draw[edge] (v22) -- (v32);

    \draw[edge] (v13) -- (v23);
    \draw[edge] (v23) -- (v33);

    \draw[edge] (v21) -- (v12);
    \draw[edge] (v23) -- (v32);
    \draw[edge] (v21) -- (v32);
    \draw[edge] (v23) -- (v12);

\end{tikzpicture}
    \begin{tikzpicture}[baseline=(v11),mainvertex/.style={draw, circle, minimum height=1.2em, inner sep=0em},
    pathvertex/.style={draw=darkgreen, circle, minimum width=0.5em, inner sep=0em,fill=darkgreen},
    treevertex/.style={draw=darkred, circle, minimum width=0.25em, inner sep=0em,fill=darkred},
    token/.style={draw,rectangle, minimum height=0.25em,  minimum width=0.25em, fill=black, inner sep=0em},longpath/.style={dashed},
    edge/.style={draw},
    treeedge/.style={draw,very thick,darkred},
    pathedge/.style={draw,very thick,darkgreen},
    finger/.style={dashed},
    xscale=0.4, yscale=0.4]

    \node[mainvertex] (v11) at (0,0) {};

    \node[mainvertex] (v12) at (4,0) {};

    \node[mainvertex] (v13) at (8,0) {};

    \node[mainvertex] (v21) at (0,4) {};

    \node[mainvertex] (v22) at (4,4) {};

    \node[mainvertex] (v23) at (8,4) {};

    \node[mainvertex] (v31) at (0,8) {};

    \node[mainvertex] (v32) at (4,8) {};

    \node[mainvertex] (v33) at (8,8) {};

    \draw[edge] (v11) -- (v12);
    \draw[edge] (v12) -- (v13);

    \draw[edge] (v21) -- (v22);
    \draw[edge] (v22) -- (v23);

    \draw[edge] (v31) -- (v32);
    \draw[edge] (v32) -- (v33);

    \draw[edge] (v11) -- (v21);
    \draw[edge] (v21) -- (v31);

    \draw[edge] (v12) -- (v22);
    \draw[edge] (v22) -- (v32);

    \draw[edge] (v13) -- (v23);
    \draw[edge] (v23) -- (v33);

    \draw[edge] (v21) -- (v12);
    \draw[edge] (v23) -- (v32);
    \draw[edge] (v21) -- (v32);
    \draw[edge] (v23) -- (v12);

    \node[treevertex] (t1) at (2,-0.5) {};
    \node[treevertex,label=left:\textcolor{darkred}{root}] (t2) at (-0.5,2) {};
    \node[treevertex] (tx) at (1.5,1.5) {};
    \node[treevertex] (t3) at (2.5,2.5) {};
    \node[treevertex] (t4) at (2.5,5.5) {};
    \node[treevertex] (t5) at (5.5,2.5) {};
    \node[treevertex] (t6) at (6.5,1.5) {};
    \node[treevertex] (t7) at (6,-0.5) {};
    \node[treevertex] (t8) at (8.5,2) {};
    \node[treevertex] (t9) at (5.5,5.5) {};
    \node[treevertex] (t10) at (3.5,6) {};
    \node[treevertex] (t11) at (6.5,6.5) {};
    \node[treevertex] (t12) at (8.5,6) {};
    \node[treevertex] (t13) at (6,8.5) {};
    \node[treevertex] (t14) at (1.5,6.5) {};
    \node[treevertex] (t15) at (2,8.5) {};
    \node[treevertex] (t16) at (-0.5,6) {};

    \draw[treeedge] (t2) -- (tx);
    \draw[treeedge,finger] (t1) -- (tx);
    \draw[treeedge] (tx) -- (t3);
    \draw[treeedge] (t3) -- (t4);
    \draw[treeedge] (t3) -- (t5);
    \draw[treeedge] (t6) -- (t5);
    \draw[treeedge,finger] (t6) -- (t7);
    \draw[treeedge,finger] (t6) -- (t8);
    \draw[treeedge] (t5) -- (t9);
    \draw[treeedge,finger] (t10) -- (t9);
    \draw[treeedge] (t11) -- (t9);
    \draw[treeedge,finger] (t11) -- (t12);
    \draw[treeedge,finger] (t11) -- (t13);
    \draw[treeedge] (t14) -- (t4);
    \draw[treeedge,finger] (t14) -- (t15);
    \draw[treeedge,finger] (t14) -- (t16);
\end{tikzpicture}
    \begin{tikzpicture}[baseline=(v11),mainvertex/.style={draw, circle, minimum height=1.2em, inner sep=0em},
    pathvertex/.style={},
    st/.style={draw=darkgreen, circle, minimum width=0.4em, inner sep=0em,fill=darkgreen},
    treevertex/.style={draw=lightred, circle, minimum width=0.25em, inner sep=0em,fill=lightred},
    token/.style={draw,rectangle, minimum height=0.25em,  minimum width=0.25em, fill=black, inner sep=0em},longpath/.style={dashed},
    edge/.style={draw},
    treeedge/.style={draw,thick,lightred},
    pathedge/.style={draw,thick,darkgreen,overlay},
    finger/.style={dashed},
    xscale=0.4, yscale=0.4]
    \newcounter{pathlastvertex}
    \newcommand\startpath{\edef\lastpend{\thepathlastvertex}}

    \newcommand\pathvertex[3][]{
        \stepcounter{pathlastvertex}

        \node[pathvertex,#1] (p\thepathlastvertex) at (#2,#3) {};

    }

    \node[mainvertex] (v11) at (0,0) {};

    \node[mainvertex] (v12) at (4,0) {};

    \node[mainvertex] (v13) at (8,0) {};

    \node[mainvertex] (v21) at (0,4) {};

    \node[mainvertex] (v22) at (4,4) {};

    \node[mainvertex] (v23) at (8,4) {};

    \node[mainvertex] (v31) at (0,8) {};

    \node[mainvertex] (v32) at (4,8) {};

    \node[mainvertex] (v33) at (8,8) {};

    \draw[edge] (v11) -- (v12);
    \draw[edge] (v12) -- (v13);

    \draw[edge] (v21) -- (v22);
    \draw[edge] (v22) -- (v23);

    \draw[edge] (v31) -- (v32);
    \draw[edge] (v32) -- (v33);

    \draw[edge] (v11) -- (v21);
    \draw[edge] (v21) -- (v31);

    \draw[edge] (v12) -- (v22);
    \draw[edge] (v22) -- (v32);

    \draw[edge] (v13) -- (v23);
    \draw[edge] (v23) -- (v33);

    \draw[edge] (v21) -- (v12);
    \draw[edge] (v23) -- (v32);
    \draw[edge] (v21) -- (v32);
    \draw[edge] (v23) -- (v12);

    \node[treevertex] (t1) at (2,-0.5) {};
    \node[treevertex,label=left:\textcolor{darkred}{root}] (t2) at (-0.5,2) {};
    \node[treevertex] (tx) at (1.5,1.5) {};
    \node[treevertex] (t3) at (2.5,2.5) {};
    \node[treevertex] (t4) at (2.5,5.5) {};
    \node[treevertex] (t5) at (5.5,2.5) {};
    \node[treevertex] (t6) at (6.5,1.5) {};
    \node[treevertex] (t7) at (6,-0.5) {};
    \node[treevertex] (t8) at (8.5,2) {};
    \node[treevertex] (t9) at (5.5,5.5) {};
    \node[treevertex] (t10) at (3.5,6) {};
    \node[treevertex] (t11) at (6.5,6.5) {};
    \node[treevertex] (t12) at (8.5,6) {};
    \node[treevertex] (t13) at (6,8.5) {};
    \node[treevertex] (t14) at (1.5,6.5) {};
    \node[treevertex] (t15) at (2,8.5) {};
    \node[treevertex] (t16) at (-0.5,6) {};

    \draw[treeedge] (t2) -- (tx);
    \draw[treeedge,finger] (t1) -- (tx);
    \draw[treeedge] (tx) -- (t3);
    \draw[treeedge] (t3) -- (t4);
    \draw[treeedge] (t3) -- (t5);
    \draw[treeedge] (t6) -- (t5);
    \draw[treeedge,finger] (t6) -- (t7);
    \draw[treeedge,finger] (t6) -- (t8);
    \draw[treeedge] (t5) -- (t9);
    \draw[treeedge,finger] (t10) -- (t9);
    \draw[treeedge] (t11) -- (t9);
    \draw[treeedge,finger] (t11) -- (t12);
    \draw[treeedge,finger] (t11) -- (t13);
    \draw[treeedge] (t14) -- (t4);
    \draw[treeedge,finger] (t14) -- (t15);
    \draw[treeedge,finger] (t14) -- (t16);

    \newcommand\kasp{0.7}
    \newcommand\sepp{0.4}
    \node[st,minimum width=0.5em, label=above left:\textcolor{darkgreen}{$s$}](s) at (-1, 3-\sepp){};
    \node[st,minimum width=0.5em, label=below left:\textcolor{darkgreen}{$t$}](t) at (-1, 1+\sepp){};
    \pathvertex{0}{3-\sepp}

    \pathvertex {1+\sepp}{3-\sepp}
    \pathvertex{1+\sepp}{4}
    \pathvertex{1+\sepp}{5+\sepp}
    \pathvertex{0-\kasp}{5+\sepp}
    \pathvertex{0-\kasp}{7-\sepp}
    \pathvertex{1}{7}
    \pathvertex{1+\sepp}{8+\kasp}
    \pathvertex{3-\sepp}{8+\kasp}
    \pathvertex{{3-(\sepp/2)}}{7-\sepp}
    \pathvertex{3}{4}
    \pathvertex{4}{3}
    \pathvertex{5}{4}
    \pathvertex{4}{5}
    \pathvertex{4-\kasp}{5+\kasp}
    \pathvertex{4-\kasp}{7-\sepp}
    \pathvertex{5+\sepp}{7-\sepp}
    \pathvertex{5+\sepp}{8+\kasp}
    \pathvertex{7-\sepp}{8+\kasp}
    \pathvertex{7}{7}
    \pathvertex{8+\kasp}{7-\sepp}
    \pathvertex{8+\kasp}{5+\sepp}
    \pathvertex{7-\sepp}{5+\sepp}
    \pathvertex{7-\sepp}{4}
    \pathvertex{7-\sepp}{3-\sepp}
    \pathvertex{8+\kasp}{3-\sepp}
    \pathvertex{8+\kasp}{1+\sepp}
    \pathvertex{7}{1}
    \pathvertex{7-\sepp}{0-\kasp}
    \pathvertex{5+\sepp}{0-\kasp}
    \pathvertex{5+\sepp}{1+\sepp}
    \pathvertex{4}{1+\sepp}
    \pathvertex{3-\sepp}{1+\sepp}
    \pathvertex{3-\sepp}{0-\kasp}
    \pathvertex{1+\sepp}{0-\kasp}
    \pathvertex{1}{1}
    \pathvertex{0-\kasp}{1+\sepp}

\draw[smooth, tension=0.8, pathedge]  plot coordinates { (s)  (p1) (p2) (p3) (p4) (p5) (p6) (p7) (p8) (p9) (p10)   (p11) (p12) (p13) (p14) (p15) (p16) (p17) (p18) (p19) (p20)   (p21) (p22) (p23) (p24) (p25) (p26) (p27) (p28) (p29) (p30) (p31) (p32) (p33) (p34) (p35) (p36) (t)};
\end{tikzpicture}
    \caption{The original graph, the spanning tree of its dual graph (red) with detours (dotted) and constructing a path (green) crossing each edge of a planar graph twice while moving around the spanning tree including its detours of its dual graph on a Jordan curve}
    \label{fig:skeleton}
\end{figure}

\shortlongversion{}{
\begin{proof}
Fix the given plane embedding of $G$.
Let $G^\ast$ be the dual graph, with one dual vertex for every face of $G$.
Choose a spanning tree $T^\ast$ of $G^\ast$ after deleting loops.
This can be done in polynomial time.
We draw $T^\ast$ in the usual way: each dual vertex is first placed in the interior of the corresponding face of $G$, and each dual edge is drawn so that it crosses its corresponding primal edge exactly once and is otherwise contained in the interiors of the two incident faces.
Since $T^\ast$ is a subgraph of the planar dual and is a tree, this drawing is planar, in particular without self-crossings.

Let N be a sufficiently small closed regular neighborhood of the drawing of $T^\ast$.
Since $T^\ast$ is a tree, N is homeomorphic to a closed disc.
Consequently, the boundary of $N$, denoted~$\partial N$, is a Jordan curve.
For every primal edge $e$ whose dual edge belongs to
$T^\ast$, the curve~$\partial N$ crosses $e$ exactly twice: once on each side of the
dual arc crossing $e$.

Now consider an edge $e\in E(G)$ whose corresponding dual edge is not in $T^\ast$.
Choose one of the faces incident with $e$, say $F_e$.
Since the dual vertex corresponding to $F_e$ lies in the tree $T^\ast$, the
neighborhood $N$ meets the face $F_e$.
Inside $F_e$, attach to $\partial N$ a very thin detour which runs from
$\partial N$ to the edge $e$, crosses $e$ once into the face on the other side of $e$, turns around in a tiny neighborhood of $e$, crosses $e$ a second time back into $F_e$, and then returns and merges with $\partial N$, see Figure 2. 
This operation modifies the boundary curve only locally (in particular, it is done at most once per edge) and keeps it a Jordan curve.
We perform this operation independently for every edge whose dual edge is not in $T^\ast$.
The detours can be chosen pairwise disjoint because each face contains only finitely many assigned edges, and the detours may be drawn in sufficiently small, mutually disjoint spaces.
Moreover, the detours are drawn inside faces of $G$, except for the two prescribed crossings with the assigned edge.

After all detours have been inserted, we obtain the Jordan curve $\gamma$.
Every edge whose dual edge lies in $T^\ast$ is crossed twice by the original
boundary $\partial N$ and is not affected by the later detours.
Every remaining edge is crossed exactly twice by its own detour and is not crossed by any other part of the construction.
Thus, $\gamma$ crosses every edge of $G$ exactly twice.
All intersections are transverse crossings with edges of $G$, and away from these crossings the curve lies in the faces of $G$ and has no self-intersections.
Hence, the union of $G$ and $\gamma$ is planar apart from the prescribed crossings.

Finally, subdivide every edge of $G$ at its two crossing points with $\gamma$.
Between consecutive subdivision vertices along $\gamma$, add an edge drawn on the
corresponding sub-curve of $\gamma$.
Since~$\gamma$ is simple and meets $G$ only at subdivision vertices, this implies planarity, and the new subdivision vertices induce a cycle.
Deleting one edge of this cycle gives a path through all subdivision vertices.
The construction uses only a spanning tree of the dual and local detours inside faces, so it is computable in polynomial time.
\end{proof}}

\begin{theorem}\label{thm:planar-hardness}
    \BothPathDiscovery{} in the classical undirected unweighted
    token-sliding model on planar graphs is NP-hard.
\end{theorem}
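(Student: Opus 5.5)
We reduce from \textsc{Circulating Orientation} on planar multigraphs, which is NP-hard~\cite{DidimoLiottaPatrignani2019HVPlanarity}. We may assume the input $(H,w)$ is connected; we may also assume the weights are encoded in unary, or replace each weighted edge by a suitable gadget if strong NP-hardness is available. Fix a plane embedding of $H$ and apply \Cref{lem:planar-spine} to obtain the spine path $P$ from $s$ to $t$ that crosses every edge $e=\{u,v\}$ of $H$ exactly twice, at subdivision vertices $a_e$ and $a'_e$. At these two crossing points, the spine offers a local choice between two routes. One route stays on the spine segment. The other is a detour that enters the edge $e$, passes near the endpoint $u$ or near the endpoint $v$ respectively, and returns. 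We design the gadget so that every $s$-$t$-path through the construction commits, for each edge $e$, to visiting either the ``$u$-side'' or the ``$v$-side'' of $e$. This commitment encodes the orientation of $e$ (say, towards the side whose vertex-gadget receives the weight). All detours lie within the faces adjacent to $e$ or along $e$ itself, so planarity is preserved.

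Tokens are placed to make this choice cost-relevant. At each original vertex $v$ we build a reservoir holding exactly $\frac12\sum_{e\ni v}w(e)$ tokens. These tokens are pendant on short paths or stacked so that each one can reach, within a fixed local distance, the detours that pass near $v$. Each detour attached to $e$ near $v$ requires $w(e)$ extra tokens beyond the spine, obtained by lengthening the detour by $w(e)$ vertices. The spine itself and all mandatory path vertices are pre-occupied by tokens. Tokens are placed so that filling a detour near $v$ with reservoir tokens at $v$ costs exactly a fixed amount $c$ per token. Any token from elsewhere must travel far, and tokens must not be cheaply shared between vertices. The budget is then $b=c\cdot\sum_e w(e)$, or an equivalent tight value. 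For \textsc{Shortest Path Discovery} we additionally balance the lengths of both alternatives so that every admissible route has the same length, with an equal number of detour vertices on each side. This makes all gadget-respecting paths shortest paths, and the tightness of the budget then forces every reservoir to be used exactly.

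Correctness proceeds in two directions. Given a circulating orientation, we route the path so that each vertex $v$ receives detours of total weight $\mathrm{out}_w(v)=\frac12\sum_{e\ni v} w(e)$, and we fill them from $v$'s reservoir at cost exactly $b$. Conversely, a solution within budget $b$ must use only local reservoir tokens, because any nonlocal move exceeds the slack. Counting then shows that each reservoir is exhausted exactly, which yields a circulating orientation.

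The main obstacle is making the gadget rigid in the undirected unweighted model. Token movement costs are graph distances in the same graph that carries the path, so the construction must rule out three kinds of cheating: tokens from neighbouring reservoirs being reached through the spine, paths taking shortcuts through the reservoir paths, and partial or mixed detours. To handle this, we would start by separating reservoirs from the spine by long subdivided paths. Reservoir vertices are then nonadjacent to one another and only locally useful, and the distance padding can be tuned so that every misuse costs strictly more than the budget slack.
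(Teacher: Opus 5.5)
You have the same high-level plan as the paper: reduce from \textsc{Circulating Orientation} on planar graphs, use the spine of \Cref{lem:planar-spine}, give each vertex $v$ a reservoir of $W(v)/2$ tokens reaching branch vertices along long subdivided paths, and use a tight budget. There is, however, a genuine gap. It sits exactly at the step you defer with ``we design the gadget so that every $s$-$t$-path commits''.

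First, at each crossing one alternative stays on the pre-occupied spine, and only the detour needs $w(e)$ new tokens. So the $s$-$t$ path that never detours is already fully occupied, and every constructed instance of \PathDiscovery is a yes-instance at cost $0$. Balancing lengths for \ShortestPathDiscovery does not repair this. Either the vertices added to the spine alternative are pre-occupied, and the detour-free path is a free shortest path, or they are empty and you name no source of tokens for them.

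Second, and more fundamentally, the choices at $a_e$ and at $a'_e$ are made independently at two different places of the spine. A path may therefore take both detours of $e$, or neither. Neither your tight budget nor the reservoir counts exclude this: taking both detours of one edge and none of another edge of the same weight needs the same number of tokens, at the same cost, as one detour each. Exhausting all reservoirs then only says that every $v$ selects incident edge-sides of total weight $W(v)/2$, independently of its neighbours. That is not an orientation, so your converse ``counting'' step fails. Long attachment paths rule out nonlocal tokens and shortcuts, but not this.

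The paper supplies the coupling with a third kind of reservoir. For a bundle $F$ of parallel edges between $v_1$ and $v_2$, it adds a reservoir $v_{1,2}$ holding $\sum_{e\in F}w(e)$ tokens. This reservoir sits on the part of the edges between the two crossings, which lies on the other side of the Jordan curve $\gamma$ from both endpoints. At the first crossing, \emph{both} alternatives $Q^e_1,Q^e_2$ have $w(e)$ empty internal vertices, fed by $v_1$ and by $v_{1,2}$ respectively via paths of length $L$. At the second crossing, $R^e_1,R^e_2$ are fed by $v_{1,2}$ and by $v_2$.

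Now every gadget path needs exactly $2W$ new tokens, and the reservoirs hold exactly $\sum_v W(v)/2+\sum_F\sum_{e\in F}w(e)=2W$. The budget $2W\cdot L$ then forces every reservoir to be emptied exactly. Emptying $v_{1,2}$ gives $\sum_{Q^e_2\text{ chosen}}w(e)+\sum_{R^e_1\text{ chosen}}w(e)=\sum_{e\in F}w(e)$. Hence the weight drawn from $v_2$ equals $\sum_{e\in F}w(e)$ minus the weight drawn from $v_1$. This yields a consistent orientation of each bundle with the required weighted out-degrees. Your construction needs such a shared middle resource, or some other coupling device between the two crossings of an edge.
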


\begin{figure}[t]
    \centering
    \begin{tikzpicture}[baseline={(0,0)},mainvertex/.style={draw, circle, minimum height=1.7em},hiddenmainvertex/.style={minimum height=1.3em}, weight/.style={},
    pathvertex/.style={draw=darkgreen, fill=darkgreen, circle, minimum width=0.25em},token/.style={draw,rectangle, minimum height=0.25em,  minimum width=0.25em, fill=black, inner sep=0em},
    longpath/.style={dashed},
    edge/.style={draw},
    pathedge/.style={draw,very thick,darkgreen,overlay}
]
    \newcommand\yscale{0.5}
    \node[mainvertex,label=left:$v_1$] (v1) at (0,6*\yscale) {};

    \node[mainvertex, label=left:$v_2$] (v2) at (0,-6*\yscale) {};

    \node[pathvertex,label=\color{darkgreen}$a_s$] (qs) at (-2,3*\yscale) {};
    \node[pathvertex,label=\color{darkgreen}$a_t$] (qt) at (2,3*\yscale) {};
    \node[pathvertex,label=below:\color{darkgreen}$c_s$] (rs) at (2,-3*\yscale) {};
    \node[pathvertex,label=below:\color{darkgreen}$c_t$] (rt) at (-2,-3*\yscale) {};

    \draw[edge, bend right] (v1) to node[left] {$1$} (v2);
    \draw[edge] (v1) to node[left] {$2$} (v2);
    \draw[edge, bend left] (v1) to node[left] {$3$} (v2);

    \draw[pathedge] (qs) -- (qt);
    \draw[pathedge] (rs) -- (rt);
    \node[draw=none,label={[label distance=0.25cm]below:\color{darkgreen}$\leftarrow s$}] at (qs) {};
    \node[draw=none,label={[label distance=0.25cm]above:\color{darkgreen}$\leftarrow t$}] at (rt) {};

    \draw[pathedge,dashed, decorate,decoration=snake] (qt) -- (rs);

\end{tikzpicture}\hfill
    \newcommand\token{\tikz{\node[token]{};}}
\begin{tikzpicture}[baseline=(v2),mainvertex/.style={draw, circle, minimum height=1.7em},hiddenmainvertex/.style={minimum height=1.3em}, weight/.style={},
    pathvertex/.style={draw=darkgreen, thick, circle, minimum width=0.25em},token/.style={draw,rectangle, minimum height=0.25em,  minimum width=0.25em, fill=black, inner sep=0em},longpath/.style={dashed},
pathedge/.style={draw,thick,darkgreen,overlay}
]
    \newcommand\scale{0.5}
    \newcommand\mainvertex[5][missing]{
        \node[mainvertex,label=left:#1] (v#2) at (#3*\scale, #4*\scale){};
        \tokens{#3}{#4}{0.46em}{#5}
    }
    \newcommand\tokens[4]{
        \foreach \i in {1, ..., #4} {    \node[token] (n\i) at ($(#1*\scale, #2*\scale)+({(\i - 1) * 360 / #4}:#3)$) {};  }
    }
    \newcounter{xcoordc}
    \newcommand\nextx{\stepcounter{xcoordc}}
    \newcommand\xcoord{{\value{xcoordc}*\scale}}
    \newcommand\ycoord{0}
    \newcommand{\accessnode}[1]{\the\numexpr (#1)/2\relax}
    \newcommand\gadget[5][]{
        \nextx
        \node[pathvertex] (y#2#3#4) at (\xcoord,{\ycoord}) {};
        \draw[pathedge] (\lastvertexname) -- (y#2#3#4);
        \renewcommand\lastvertexname{y#2#3#4}
        \node[token] at (\lastvertexname) {};

        \foreach \i in {1, ..., #5} {
            \nextx
            \node[pathvertex,#1] (w#2#3#4#2\i) at (\xcoord,\ycoord+\scale) {};
            \node[pathvertex,#1] (w#2#3#4#3\i) at (\xcoord,\ycoord-\scale) {};
            \draw[longpath] (v#2) -- (w#2#3#4#2\i);
            \draw[longpath] (v#3) -- (w#2#3#4#3\i);

        }

        \nextx
        \node[pathvertex] (w#2#3#4) at (\xcoord,\ycoord) {};
        \node[token] at (w#2#3#4) {};
        \draw[pathedge] (\lastvertexname) -- (w#2#3#4#21);
        \draw[pathedge] (\lastvertexname) -- (w#2#3#4#31);

        \renewcommand\lastvertexname{w#2#3#4}

        \ifnum
            #5>1
            \foreach \i in {1, ..., \the\numexpr#5-1\relax} {
                \draw[pathedge] (w#2#3#4#2\i) -- (w#2#3#4#2\the\numexpr\i+1\relax);
                \draw[pathedge] (w#2#3#4#3\i) -- (w#2#3#4#3\the\numexpr\i+1\relax);
            }
        \fi

        \draw[pathedge] (w#2#3#4) -- (w#2#3#4#2#5);
        \draw[pathedge] (w#2#3#4) -- (w#2#3#4#3#5);
    }

    \mainvertex[$v_1$]{1}{6}{3}{3};
    \mainvertex[$v_{1,2}$]{2}{6}{-3}{6};
    \mainvertex[$v_{2}$]{3}{6}{-9}{3};

    \newcommand\intermediatevertex[2]{
        \foreach \i in {1, ..., #2} {
            \nextx
            \node[pathvertex] (#1\i) at (\xcoord,0) {};
            \node[token] at (#1\i) {};
            \draw (\lastvertexname) -- (#1\i);

            \xdef\lastvertexname{#1\i}
        }
    }

    \node[label=\color{darkgreen}$a_s$,pathvertex] (stop) at (\xcoord,\ycoord) {};
    \node[token] at (stop) {};
    \newcommand\lastvertexname{stop}
    \gadget{1}{2}{a}{1}
    \gadget{1}{2}{b}{2}
    \gadget{1}{2}{c}{3}

    \nextx

    \node[label=\color{darkgreen}$a_t$,pathvertex] (ttop) at (\xcoord,\ycoord) {};
    \node[token] at (ttop) {};
    \draw[pathedge] (\lastvertexname) -- (ttop);

    \setcounter{xcoordc}{0}

    \renewcommand\ycoord{-6*\scale}

    \node[label=below:\color{darkgreen}$c_t$,pathvertex] (sbottom) at (\xcoord,\ycoord) {};
    \node[token] at (sbottom) {};
    \renewcommand\lastvertexname{sbottom}
    \gadget{2}{3}{a}{1}
    \gadget{2}{3}{b}{2}
    \gadget{2}{3}{c}{3}

    \nextx

    \node[label=below:\color{darkgreen}$c_s$,pathvertex] (tbottom) at (\xcoord,\ycoord) {};
    \node[token] at (tbottom) {};
    \draw[pathedge] (\lastvertexname) -- (tbottom);

    \node[draw=none,label={[label distance=0.5cm]above:\color{darkgreen}$\leftarrow t$}] at (sbottom) {};
    \node[draw=none,label={[label distance=0.5cm]below:\color{darkgreen}$\leftarrow s$}] at (stop) {};
    \draw[pathedge,dashed, decorate,decoration=snake] (ttop) -- (tbottom);

\end{tikzpicture}
    \caption{Example of multi-edges of weights $1$, $2$ and $3$ between two vertices in $H$ and a planar gadget that would replace them in $G$.
    Only the token capacity for these edges is shown in $v_1$ and $v_2$, other adjacent edges would contribute more tokens.}
    \label{fig:crossing-gadget}
\end{figure}

\begin{proof}
We start the proof by describing our construction. Let $(H,w)$ be an instance of \textsc{Circulating Orientation}, where $H$ is a planar graph allowing multi-edges with fixed plane embedding and $w\colon E\to \mathbb{N}$ is an edge-weight function.
For a vertex $v\in V(H)$, let $W(v)\coloneqq\sum_{e\ni v} w(e)$ denote the total of its incident weights.
If $W(v)$ is odd for some vertex~$v$, then the instance is trivially negative.
Hence, we may assume throughout that $W(v)$ is even for every vertex~$v$.
Set $W\coloneqq\sum_{e\in E(H)} w(e)$ and $L\coloneqq2|E(H)|+2W+1$.
We construct an instance $(G,s,t,S,b)$ of \textsc{(Shortest) Path Discovery} as follows.

We incrementally construct the planar graph $G$. It initially contains a copy of $V(H)$.
On every vertex $v \in G$, we place $W(v)/2$ tokens to make it a reservoir vertex containing tokens. Now we add the vertices and paths described below. We apply \Cref{lem:planar-spine} to find the Jordan curve $\gamma$ and the path $P$ crossing every edge exactly twice. We add vertices $s$ and $t$ to the graph and place a token on each of $s$ and $t$. In the following we will describe on how to use the path $P$ to build and connect planar gadgets for each edge $e \in E(H)$.

For every pair of vertices $v_1$, $v_2$ connected by multi-edges $F \subseteq E(H)$, we create a multi-edge gadget.
An example for such gadgets is given in~\Cref{fig:crossing-gadget}.
\begin{itemize}
    \item We create a reservoir vertex $v_{1,2}$ and place tokens equal to the edges capacity $\sum_{e\in F} w(e)$ on it.
    \item For every $e \in F$, we create two pairs of alternative paths, $Q^e_{1}$ and $Q^e_{2}$, from $q^e_s$ to $q^e_t$ and $R^e_{1}$ and $R^e_{2}$ from $r^e_s$ to $r^e_t$. 
    Each of these alternative paths contains $w(e)$ internal vertices. 
    Choosing one of the two will be equivalent to choosing an edge direction.
    We place tokens on $q^e_s,q^e_t,r^e_s,r^e_t$.
    \item We now connect all internal vertices of $Q^e_1$ with $v_1$, all internal vertices of $R^e_2$ with $v_2$ and all internal vertices of $Q^e_2$ and $R^e_1$ with $v_{1,2}$ via long subdivided paths of length $L$. 
    This prevents shortcuts on our intended $s$-$t$ paths while still being quadratic in size.
    \item Finally, we connect these gadgets as given by the path $P$: If $P$ crosses some edge $e$ for the first (second) time just before crossing $e'$ for the first (second) time, we identify $q^e_t$~($r^e_t$) with $q^{e'}_s$ ($r^{e'}_s$).
    \end{itemize}

Finally, we now complete our $s$-$t$ path by connecting $s$ to $q^e_s$ of the first edge $e$ crossed by~$P$ and $r^{e'}_t$ to $t$ for the last edge $e'$ touched by $P$.
By our construction, the shortest paths from $s$ to $t$ exactly have length $L$, and never use any of the reservoir vertices $v_1$, $v_2$, and $v_{1,2}$. This construction preserves planarity, see~\Cref{fig:crossing-gadget}.

Let $S$ be the resulting multiset of token positions, and set
$b\coloneqq2W\cdot L$.

In the following, we show correctness of our construction.
For any shortest $s$-$t$ path of length $L$, there are exactly $2W$ missing tokens.
These exactly have to come from the reservoir vertices.
As we have introduced subdivided paths of length $L$ and set our budget accordingly, the tokens can only move from their reservoir vertices to their corresponding internal vertices of the alternative gadget paths.
Moving a single token further than that would prevent the path from being constructed as we would exceed the budget elsewhere and all tokens are needed to form the path.

We show that the constructed instance $(G,s,t,S,b)$ is positive if and only if
$(H,w)$ admits a circulating orientation.

First, note that every $s$-$t$ path in $G$ must traverse the $Q$-gadgets and $R$-gadgets corres\-ponding to edges in order.
Inside each edge gadget, such a path must choose exactly one of the two internally disjoint branches.
Initially, all vertices $q_s^e, q_t^e, r_s^e, r_t^e$ for every $e \in E(H)$ as well as~$s$ and $t$ carry tokens.
Thus, any feasible discovered path requires exactly $2W$ additional occupied vertices.
The only tokens not already placed on mandatory path vertices are the reservoir tokens.


Each reservoir token sent to an internal gadget vertex must traverse one of the
attachment paths, each of length exactly $L$.
Therefore, moving one such token to its destination costs exactly $L$.
Since a feasible discovered path requires exactly $2W$ such internal vertices to be
occupied, every feasible solution of cost at most $b=2W\cdot L$
must move exactly $2W$ tokens, and each of them must use a shortest attachment path.
In particular, no token can afford any detour, and no token can first move through
one gadget and then continue elsewhere.

Now fix an edge $e=v_1v_2 \in E(H)$.
If the final $s$-$t$ path uses the branch $Q^e_{1}$, then all~$w(e)$ internal vertices of that branch must be occupied.
By construction, every such token must come from the reservoir $v_1$.
Hence, choosing $Q^e_{1}$ consumes exactly $w(e)$ tokens from~$v_1$.
Similarly, choosing $Q^e_2$ consumes exactly $w(e)$ tokens from $v_{1,2}$. For the gadgets $R^e$ the situation is analogous.
Consequently, a feasible solution determines for every vertex $v_1$ a set of edges $e$ for which the branches $Q^e_1$ are used. We orient these edges $e$ to $v_1$. Finally, the edges for which the shortest path uses branch $Q^e_2$ are directed to $v_2$. Let $k$ be the number of tokens used from $v_1$ on the branches $Q^e_1$. Clearly, we use an additional $\sum_{e \in F} w(e) -k$ tokens from $v_{1,2}$. Consequently, we have to use $\sum_{e \in F} w(e) -k$ tokens from $v_2$ for the $R^e_2$ branches.

As each vertex $v$ initially contains exactly $W(v)/2$ tokens and every feasible
solution must use all reservoir tokens, we obtain
\[
\sum_{e\text{ oriented out of }v} w(e)=W(v)/2.
\]
Thus, the chosen orientations form a circulating orientation of $(H,w)$.

Conversely, suppose that $(H,w)$ admits a circulating orientation.
For every edge $e=v_1v_2$, if $e$ is oriented from $v_2$ to $v_1$, let
the final $s$-$t$ path use the branch $Q^e_1$ and the branch $R^e_1$; otherwise let it use the
branches $Q^e_2$ and $R^e_2$.
This determines a feasible $s$-$t$ path through the gadget chain and the budget to move the tokens is exactly used.

Therefore, the constructed instance is positive if and only if $(H,w)$ admits a
circulating orientation.
%
\end{proof}

\subsection{\XNLP-hardness for pathwidth and cutwidth even in the undirected one-graph model}
\label{subsec:xnlp-hardness-pathwidth-cutwidth}

We prove that \textsc{(Shortest) Path Discovery} is \XNLP-hard already in the
classical undirected unweighted token-sliding model when parameterized by
pathwidth, and even when parameter\-ized by cutwidth.

We recall the two width parameters that we use. A \emph{path decomposition} of a
graph $G$ is a sequence $\mathcal{B}=(B_1,\ldots,B_r)$ of vertex sets, called
bags, such that every vertex of $G$ appears in some bag, every edge of $G$ has
both endpoints in some bag, and for every vertex $v\in V(G)$, the set of indices
$i$ with $v\in B_i$ is an interval. The width of $\mathcal{B}$ is
$\max_i |B_i|-1$, and the \emph{pathwidth} of $G$, denoted $\pw(G)$,
is the minimum width of a path decomposition of $G$.

A \emph{linear layout} of a graph $G$ is an ordering
$\pi=(v_1,\ldots,v_n)$ of $V(G)$. The width of $\pi$ is
\[
   \max_{i\in[n-1]}
   |\{uv\in E(G) : \pi^{-1}(u)\le i < \pi^{-1}(v)
       \text{ or } \pi^{-1}(v)\le i < \pi^{-1}(u)\}|.
\]
The \emph{cutwidth} of $G$, denoted $\ctw(G)$, is the minimum width
of a linear layout of $G$.

We use the standard relation between pathwidth and cutwidth. Let
$\Delta(G)$ denote the maximum degree of $G$. Since pathwidth equals vertex
separation~\cite{Kinnersley1992}, one obtains
\[
   \pw(G) \le \ctw(G)
   \qquad\text{and}\qquad
   \ctw(G) \le \Delta(G)\cdot(\pw(G)+1).
\]
\shortlongversion{}{
Indeed, every linear layout of cutwidth $c$ has vertex separation at most $c$,
because every vertex that has already appeared and still has a neighbour to the
right contributes at least one edge crossing the cut. Conversely, from a path
decomposition of width $p$, order vertices by their last bag. At every cut, all
edges crossing the cut have their left endpoint in the current bag, and hence
there are at most $\Delta(G)(p+1)$ such edges. Thus, on graph classes of bounded
maximum degree, bounded pathwidth and bounded cutwidth are functionally
equivalent.}

The reduction is from \textsc{Circulating Orientation} parameterized by
pathwidth, which is \XNLP-complete~\cite{BodlaenderSzilagyi2025XALPPlanar}, and similar to the reduction for planar graphs. 

\shortlongversion{\stepcounter{theorem}}{
We first need a simple way to obtain a bounded-pathwidth spine through the edges
of the source graph.

\begin{lemma}\label[lemma]{lem:pathwidth-spine-closure}
Let $H$ be a graph with $\pw(H)\le k$. Given a path decomposition
of $H$ of width at most $k$, one can construct in polynomial time a graph
$H^\star$ with $\pw(H^\star)\le k+2$ such that 
every edge of $H$ is subdivided exactly once, and the subdivision vertices are
connected into one path. 
\end{lemma}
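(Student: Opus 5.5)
We build $H^\star$ directly from a nice path decomposition $(B_1,\ldots,B_r)$ of $H$ of width at most $k$, ordered left to right. Since every edge has both endpoints in some bag, we assign each edge $e=uv\in E(H)$ to one index $i(e)$ with $u,v\in B_{i(e)}$. Then we fix a linear order $e_1,e_2,\ldots,e_m$ of $E(H)$ that is consistent with $i(\cdot)$, meaning $i(e_j)\le i(e_{j+1})$, with ties broken arbitrarily. Let $H^\star$ be obtained from $H$ by subdividing each $e_j$ once with a new vertex $x_j$, and by adding the path edges $x_jx_{j+1}$ for all $j\in[m-1]$. Each subdivision vertex is therefore adjacent to the two endpoints of its edge and to its predecessor and successor on the spine path. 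The construction is clearly polynomial.

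To bound the pathwidth, we refine the given decomposition. For each original index $i$, let $e_{a},\ldots,e_{c}$ be the edges assigned to $i$, in spine order. We replace $B_i$ by the sequence of bags
\[
B_i\cup\{x_{a-1},x_a\},\quad B_i\cup\{x_a,x_{a+1}\},\quad\ldots,\quad B_i\cup\{x_{c-1},x_c\}.
\]
Between consecutive original indices $i<i'$, we also carry the last spine vertex assigned so far into the bags for $B_{i'}$; this lets the edge from that vertex to the next assigned subdivision vertex be covered. If some index has no assigned edges, we simply add the current last spine vertex to $B_i$. Each new bag then has size at most $|B_i|+2\le k+3$, so the width is at most $k+2$.

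The verification proceeds in three steps.
\begin{enumerate}
\item \emph{Vertex coverage.} Every vertex of $H^\star$ lies in some bag: original vertices stay in their bags, and each $x_j$ appears in a bag of its index.
\item \emph{Edge coverage.} The edges $ux_j$ and $vx_j$ are covered because $u,v\in B_{i(e_j)}$, and $x_j$ is added to a bag refining $B_{i(e_j)}$. The spine edges $x_jx_{j+1}$ are covered by the pair bags, including across an index boundary, thanks to the carried-over spine vertex.
\item \emph{Interval property.} For original vertices this holds because we only duplicated bags in place. For $x_j$, its occurrences form a contiguous run: it appears from its own refined bag up to the bag where $x_{j+1}$ first appears. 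Because $i(\cdot)$ is monotone along the spine, that run never jumps backwards.
\end{enumerate}

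The main obstacle I expect is the bookkeeping at index boundaries and at indices with no assigned edges. The carried spine vertex must be present in every intermediate bag until its successor appears, and this adds exactly one extra vertex per bag. Combined with the current pair, this is why the bound is $k+2$ rather than $k+1$. I will check the width carefully against this count, and if needed shift to bags of the form $B_i\cup\{x_{j},x_{j+1}\}$ only, with the carried vertex counted as one of the pair.
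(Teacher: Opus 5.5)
Your proposal is correct and follows essentially the same route as the paper: order the edges by the index of a bag containing both endpoints, subdivide them and chain the subdivision vertices, then refine each bag $B_i$ into consecutive copies that each hold at most two consecutive spine vertices, with the last one carried forward until its successor appears. This gives bags of size at most $k+3$, and the hedging in your last paragraph is unnecessary, since in your explicit bags $B_i\cup\{x_{j-1},x_j\}$ the carried vertex is already one of the pair, so no bag ever contains three subdivision vertices.
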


\begin{proof}
Let $\mathcal{B}=(B_1,\ldots,B_r)$ be a path decomposition of $H$ of width at
most $k$. For every edge $e=uv\in E(H)$, let $p(e)$ be an index such that
$u,v\in B_{p(e)}$. Order the edges as
$e_1,\ldots,e_m$
so that
$p(e_1)\le p(e_2)\le \cdots \le p(e_m)$.
For every $j\in[m]$, subdivide $e_j$ once by a new vertex $x_j$. Then add the
edges $x_jx_{j+1}$ for $j\in[m-1]$.
Thus, the subdivision vertices induce the path
$x_1,x_2,\ldots,x_m$.

We construct a path decomposition of the resulting graph $H^\star$. We keep the
original bags, but around the position $p(e_j)$, we additionally introduce the
subdivision vertex $x_j$, and between $p(e_j)$ and $p(e_{j+1})$, we carry
$x_j$ forward. More explicitly, for every $j<m$, the vertex~$x_j$ is added to
all bags from position $p(e_j)$ up to position $p(e_{j+1})$, and $x_m$ is added
to the bag at position $p(e_m)$. If several edges have the same value of $p(e)$,
we insert the corresponding bags consecutively at that position.

Every original vertex appears in the same interval as before. Every subdivision
vertex~$x_j$ appears in an interval. The edges $u_jx_j$ and $x_jv_j$ are covered
at a bag containing both endpoints of $e_j=u_jv_j$, and the edge $x_jx_{j+1}$ is
covered at the position where $x_{j+1}$ is introduced, since there we keep
$x_j$ and introduce $x_{j+1}$. Every bag contains one original bag and at most
two subdivision vertices. Hence, its size is at most $k+3$, and the width is at
most $k+2$.
\end{proof}}

\begin{theorem}\label{thm:xnlp-hardness-pathwidth}
\ShortestPathDiscovery{} is \XNLP-hard in the classical undirected
unweighted token-sliding model when parameterized by the pathwidth of the input
graph.
\end{theorem}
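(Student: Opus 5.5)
We follow the planar reduction of \Cref{thm:planar-hardness}, but replace the Jordan-curve spine of \Cref{lem:planar-spine} by the bounded-pathwidth spine of \Cref{lem:pathwidth-spine-closure}. The source is \textsc{Circulating Orientation} parameterized by pathwidth, which is \XNLP-complete~\cite{BodlaenderSzilagyi2025XALPPlanar}. Since weights may be large, we assume they are given in unary; this is the regime in which the \XNLP-hardness holds. We also need the reduction to be a parameterized logspace reduction with output pathwidth $\Oof(\pw(H))$. Given $(H,w)$ and a path decomposition of width $p$, we therefore first compute the edge order $e_1,\dots,e_m$ from \Cref{lem:pathwidth-spine-closure}. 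This order replaces the crossing order along $P$, and computing it only requires sorting edges by $p(e)$, which is logspace.

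The construction has three parts. \emph{Reservoirs:} each $v\in V(H)$ becomes a reservoir carrying $W(v)/2$ tokens, and each edge class $F$ between $v_1,v_2$ gets a reservoir $v_{1,2}$ with $\sum_{e\in F}w(e)$ tokens. (Alternatively, we first subdivide multi-edges so that each edge gets its own middle reservoir, which avoids multi-edge bookkeeping.) \emph{Chains:} we build two chains, the $Q$-chain $q^{e_1}_s,\dots,q^{e_m}_t$ followed by the $R$-chain. Each $e$ contributes two alternative branches of $w(e)$ internal vertices, and the tokens on the gadget endpoints, on $s$ and on $t$ are exactly as before. The $Q$-chain and the $R$-chain are joined by a connecting path; we place tokens on that path so it is mandatory and cost-free. \emph{Attachments:} each internal branch vertex is joined to its reservoir by a path of length $L$, and we set $b=2W\cdot L$. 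The correctness argument of \Cref{thm:planar-hardness} then transfers verbatim. All shortest $s$-$t$ paths have the same length and traverse both chains, picking one branch per gadget. Exactly $2W$ vertices are unoccupied, each costs at least $L$, and the budget forces every reservoir to be fully emptied along direct attachments. This yields the equivalence with circulating orientations. One point needs checking: the attachment paths must not create shortcuts between chain vertices. Any path through a reservoir has length at least $2L$, which exceeds the chain length, so $L$ must be taken larger than the chain length.

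The main obstacle is bounding the pathwidth. The chains can be laid out in the order $e_1,\dots,e_m$, i.e.\ following the original decomposition. The problem is that a single reservoir $v$ has attachments to many branch vertices spread along the chain. So we keep $v$ in every bag of its interval. This interval must cover all positions $p(e)$ with $e\ni v$, which holds because $p(e)$ is a bag containing $v$ and the bags containing $v$ form an interval. The middle reservoir $v_{1,2}$ lives only around $p(e)$. Each attachment path of length $L$ is handled by a local sweep that keeps only two of its vertices at a time, adding $\Oof(1)$ to the width. A gadget branch has unbounded length $w(e)$, but we process it vertex by vertex, sweeping one attachment at a time.

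The second difficulty is that the $Q$-chain and the $R$-chain are both needed at the position of every edge. We resolve this by interleaving. Rather than placing the $R$-chain after the $Q$-chain, we run both chains in parallel in the same order, and we link them at the far end through the connecting path carrying tokens. In the layout this forces one long path from the end back to the start. We handle it by splitting the connecting path into a ladder alongside the two chains. This should give width $p+\Oof(1)$.

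Finally, we verify the logspace computability: all objects have sizes polynomial in the unary-encoded input and are locally describable, so the reduction is a parameterized logspace reduction.
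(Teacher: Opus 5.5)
Your reduction is correct, but it is heavier than the paper's and differs from it in a real way. The paper does not transplant the planar construction. Because the spine of \Cref{lem:pathwidth-spine-closure} subdivides every edge only once, it uses a single chain $Q_1,\dots,Q_m$. The two branches of $Q_i$ are attached directly to the reservoirs $p_{u_i}$ and $p_{v_i}$ of the endpoints of $e_i$, and choosing the branch of $u_i$ means orienting $e_i$ out of $u_i$. This leaves only $\sum_v W(v)/2=W$ reservoir tokens and budget $W\cdot L$, with no second chain, no middle reservoirs and no connecting path.

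Your $R$-chain and middle reservoirs $v_{1,2}$ are an artefact of the Jordan curve crossing every edge twice. Keeping them is sound: once all reservoirs are emptied, the middle reservoir of a class $F$ forces the amounts drawn from $v_1$ and $v_2$ to be $\sum_{A}w$ and $\sum_{F\setminus A}w$, where $A$ is the set of edges whose $Q_1$-branch is used, and this is a valid orientation of $F$. The benefit is that you can reuse the planar correctness argument verbatim. The cost is the self-inflicted return-path problem. That problem disappears if you run the $R$-chain in reverse order $e_m,\dots,e_1$, or simply keep $r^{e_1}_s$ in every bag. Be careful that your ``ladder'' is only a plain path spread backwards across the bags, since any rung joining it to the chains would create shortcuts and destroy the shortest-path structure.

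On the positive side, you are more careful than the paper about two points it leaves implicit but which an \XNLP-hardness reduction needs: weights must be unary, and the reduction must be logspace computable.

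One step is phrased incorrectly, in the same way as in the paper. The claim that each of the $2W$ empty vertices ``costs at least $L$'' is literally false, because a chain token can slide into a branch at cost $1$. The correct argument counts tokens. The number of tokens equals the number of path vertices, so every token is used. Each reservoir token then travels at least $L$, and the tight budget forces all chain tokens to stay put.
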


\shortlongversion{}{
\begin{proof}
We reduce from \textsc{Circu\-lating Orientation}. Let $(H,w)$ be an instance of
\textsc{Circu\-lating Orientation}, and suppose that $H$ is given with a path
decomposition of width at most $k$. Let
$e_1,\ldots,e_m$
be the edge order obtained from \Cref{lem:pathwidth-spine-closure}. For every
edge $e_i=u_iv_i$, we create an edge gadget $Q_i$ with source vertex $q_i^s$ and
target vertex $q_i^t$. The gadget consists of two internally vertex-disjoint
$q_i^s$-$q_i^t$ paths: $Q_i^{u_i}$ and $Q_i^{v_i}$.
Each of these two paths has exactly $w(e_i)$ internal vertices. We identify
$q_i^t$ with $q_{i+1}^s$ for all $i\in[m-1]$, and we add two terminals $s,t$
together with the edges
   $sq_1^s$ and $q_m^t t$.

Thus, every shortest $s$-$t$ path in the gadget chain traverses the gadgets
$Q_1,\ldots,Q_m$ in this order and chooses exactly one of the two branches in
each gadget. All such paths have the same number of vertices, namely $m+1+W$.

For every vertex $v\in V(H)$, we create a reservoir vertex $p_v$. For every
internal vertex $z$ of a branch associated with $v$, we connect $p_v$ to $z$ by
an internally vertex-disjoint path of length exactly
$L \coloneqq m+W+1$.
The value of $L$ is chosen larger than the length of the gadget chain, so no
shortest $s$-$t$ path uses a reservoir path. Therefore, the shortest $s$-$t$ paths
are precisely the paths that choose one branch in each edge gadget.

The initial token multiset $S$ is defined as follows. We place one token on every
mandatory chain vertex $q_1^s,q_1^t,q_2^t,\ldots,q_m^t,t$.
Moreover, for every vertex $v\in V(H)$, we place exactly $W(v)/2$ tokens on the
reservoir vertex $p_v$. Finally, we set $b \coloneqq W\cdot L$.

We prove correctness of the construction.

Every shortest $s$-$t$ path in the constructed graph must traverse the gadgets
$Q_1,\ldots,Q_m$ in order and must choose exactly one branch in each gadget. Since
each chosen branch contributes exactly $w(e_i)$ internal vertices, every shortest
$s$-$t$ path contains exactly $W$ internal branch vertices in addition to the
mandatory chain vertices.

The mandatory chain vertices already carry tokens initially. Hence, to discover
a shortest $s$-$t$ path, one has to occupy exactly the $W$ internal vertices of
the chosen branches. None of these internal branch vertices carries a token
initially. The only tokens not already placed on mandatory chain vertices are the
reservoir tokens.

Every reservoir token has distance exactly $L$ to each branch vertex associated
with its reservoir. Therefore, occupying the $W$ internal vertices of a chosen
shortest path costs at least $W\cdot L=b$. Since the budget is exactly $b$, every
solution of cost at most $b$ must move exactly $W$ reservoir tokens, each along a
shortest reservoir path of length $L$. In particular, no token can afford a detour
and no mandatory chain token can be moved away and replaced without exceeding the
budget.

Now consider an edge $e_i=u_iv_i$. If the final shortest path uses the branch
$Q_i^{u_i}$, then all~$w(e_i)$ internal vertices of this branch must be occupied
by tokens from the reservoir $p_{u_i}$. We interpret this as orienting $e_i$ from
$u_i$ to $v_i$. Similarly, if the final path uses the branch~$Q_i^{v_i}$, we
orient $e_i$ from $v_i$ to $u_i$.

For a vertex $v\in V(H)$, the number of tokens consumed from $p_v$ is exactly the
total weight of edges oriented out of $v$. Since $p_v$ initially contains exactly
$W(v)/2$ tokens and all reservoir tokens are used, we obtain
\[
   \sum_{e\text{ oriented out of }v} w(e)=W(v)/2.
\]
Thus, the orientation is circulating.

Conversely, suppose that $(H,w)$ admits a circulating orientation. For every edge
$e_i=u_iv_i$, choose the branch $Q_i^{u_i}$ if $e_i$ is oriented from $u_i$ to
$v_i$, and choose the branch $Q_i^{v_i}$ otherwise. This gives a shortest
$s$-$t$ path through the gadget chain. For every vertex $v$, the total number of
branch vertices associated with $v$ on this path is exactly the weighted
outdegree of $v$, which is $W(v)/2$. Hence, the tokens at $p_v$ suffice exactly.
Moving each such token along the corresponding length-$L$ reservoir path yields
total cost $W\cdot L=b$. Therefore, the constructed instance is positive if and
only if $(H,w)$ admits a circulating orientation.

It remains to bound the pathwidth. The edge gadgets are arranged along the spine
given by \Cref{lem:pathwidth-spine-closure}. For an edge $e_i=u_iv_i$, the gadget
$Q_i$ is placed at the position corresponding to the subdivision vertex of
$e_i$. The reservoir vertex $p_v$ is carried through the interval in which $v$
appears in the path decomposition of $H$. Since at every bag, there are at most
$k+1$ active vertices of $H$, at most $k+1$ reservoir vertices have to be kept
simultaneously. The two branches of the current edge gadget and the attachment
paths to the relevant reservoirs can be introduced and forgotten locally. This
adds only a constant number of vertices to each bag, apart from the active
reservoir vertices. Hence, the pathwidth of the constructed graph is bounded by a
function of $k$.

Thus, the reduction maps instances of \textsc{Circulating Orientation} of
pathwidth at most~$k$ to instances of \textsc{Shortest Path Discovery} whose
pathwidth is bounded by a function of $k$. Since the source problem is
\XNLP-hard parameterized by pathwidth, the theorem follows.
\end{proof}}

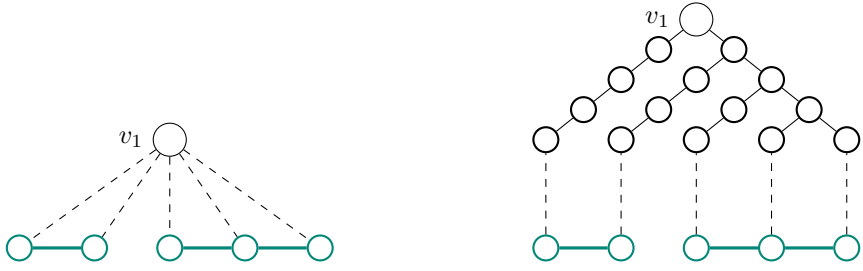
\begin{figure}
    \centering
    \begin{tikzpicture}[mainvertex/.style={draw, circle, minimum height=1.25em},hiddenmainvertex/.style={minimum height=1.3em}, weight/.style={},
    token/.style={draw,rectangle, minimum height=0.25em,  minimum width=0.25em, fill=black, inner sep=0em},
    longpath/.style={dashed},
    edge/.style={draw},
    pathedge/.style={draw,very thick,darkgreen,overlay},
    pathvertex/.style={draw=darkgreen, thick, circle, minimum width=0.25em},
    catervertex/.style={draw,thick, circle, minimum width=0.25em}
]
    \newcommand\yscale{0.4}
    \node[mainvertex,label=left:$v_1$] (v1) at (3,0*\yscale) {};

    \foreach \i in {1,...,5} {
        \node[pathvertex] (q\i) at (\i,-3.6*\yscale) {};
        \draw[longpath] (v1) to (q\i);
        \node[pathvertex] (c\i) at (\i+7,-3.6*\yscale) {};
        \node[catervertex] (d\i) at (\i+7,0) {};
        \draw[longpath] (c\i) to (d\i);
        \ifnum\i>1
        \ifnum\i=3
        \else
        \draw[pathedge] (q\i) to (q\the\numexpr\i-1\relax);
        \draw[pathedge] (c\i) to (c\the\numexpr\i-1\relax);
        \fi
        \fi
    }

    \node[catervertex] (e1) at (8.5,1*\yscale) {};
    \node[catervertex] (e2) at (9.5,1*\yscale) {};
    \node[catervertex] (e3) at (10.5,1*\yscale) {};
    \node[catervertex] (e4) at (11.5,1*\yscale) {};

    \draw (d1) to (e1);
    \draw (d2) to (e2);
    \draw (d3) to (e3);
    \draw (d4) to (e4);
    \draw (d5) to (e4);

    \node[catervertex] (f1) at (9,2*\yscale) {};
    \node[catervertex] (f2) at (10,2*\yscale) {};
    \node[catervertex] (f3) at (11,2*\yscale) {};

    \draw (e1) to (f1);
    \draw (e2) to (f2);
    \draw (e3) to (f3);
    \draw (e4) to (f3);

    \node[catervertex] (g1) at (9.5,3*\yscale) {};
    \node[catervertex] (g2) at (10.5,3*\yscale) {};
    \draw (f1) to (g1);
    \draw (f2) to (g2);
    \draw (f3) to (g2);

    \node[mainvertex,label=left:$v_1$] (w1) at (10,4*\yscale) {};

    \draw (g1) to (w1);
    \draw (g2) to (w1);

\end{tikzpicture}
    \caption{Replacing reservoir vertices with subdivided caterpillars}
    \label{fig:caterpillar}
\end{figure}

We now refine the construction to obtain bounded cutwidth. The only obstruction
to bounded cutwidth in the construction above is the possible large degree of the
reservoir vertices $p_v$. We replace each such high-degree reservoir by a
subdivided caterpillar as in \Cref{fig:caterpillar} while preserving the crucial property that every useful
move from the reservoir to an associated branch vertex has the same length exactly $L$.

\shortlongversion{}{
Fix a vertex $v\in V(H)$. Let
$Z_v$
be the set of all internal branch vertices associated with $v$; thus $|Z_v| = W(v)$.
We order the vertices of $Z_v$ according to the global edge order
$e_1,\ldots,e_m$, and arbitrarily inside one edge gadget. Write this order as
$z_1^v,\ldots,z_{W(v)}^v$.
Instead of connecting $p_v$ directly to every vertex of $Z_v$ by a length-$L$
path, we attach a subdivided caterpillar rooted at $p_v$. More precisely, we build a path
$p_v=c_0^v-c_1^v-\cdots-c_{W(v)}^v$.
Then, for each $j\in[W(v)]$, we connect $c_j^v$ to $z_j^v$ by a path of length
$L-j$. Hence, the distance from $p_v$ to $z_j^v$ is exactly $\dist(p_v,c_j^v)+(L-j)=j+(L-j)=L$.
Because $W(v)\le W<L$, all these lengths are positive. The resulting graph is a
subdivided caterpillar: the path
$p_v,c_1^v,\ldots,c_{W(v)}^v$
is the spine, and the paths to the branch vertices are subdivided legs. Every
vertex in this replacement has degree at most three.

The initial configuration is unchanged except that the $W(v)/2$ reservoir tokens
remain placed at the root $p_v$ of the corresponding caterpillar. The budget
remains $b=W\cdot L$.
The correctness proof is unchanged: every branch vertex associated with $v$ is
still at distance exactly $L$ from $p_v$, and any detour costs more than the
available tight budget.}

\begin{theorem}\label{thm:xnlp-hardness-cutwidth}
\BothPathDiscovery{} is \XNLP-hard in the classical undirected
unweighted token-sliding model when parameterized by the cutwidth of the input
graph.
\end{theorem}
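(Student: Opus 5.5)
We prove the cutwidth statement by taking the pathwidth reduction of \Cref{thm:xnlp-hardness-pathwidth} and replacing every reservoir $p_v$ by the subdivided caterpillar described just above. We then show that the resulting graph has maximum degree and pathwidth both bounded by a function of $k$. The relation $\ctw(G)\le \Delta(G)\cdot(\pw(G)+1)$ then bounds the cutwidth by a function of $k$. Since the map is computable in logspace (it consists only of explicit paths and subdivisions), \XNLP-hardness of \textsc{Circulating Orientation} parameterized by pathwidth transfers.

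\textbf{Correctness.} The reduction is unchanged in its essentials.
\begin{itemize}
\item Every shortest $s$-$t$ path still traverses the gadget chain $Q_1,\dots,Q_m$ and picks one branch per gadget. Caterpillar legs have length $L-j\ge L-W(v)>m$, and any route through a spine must enter and leave via two legs. So no caterpillar vertex lies on a path of length at most $m+W+1$, provided $L$ is chosen large enough, e.g.\ $L:=2(m+W+1)$.
\item Each branch vertex $z^v_j$ is at distance exactly $L$ from $p_v$, and at distance at least $L$ from every other token position except chain tokens.
\item With budget $b=W\cdot L$, the tight counting argument of \Cref{thm:xnlp-hardness-pathwidth} applies verbatim. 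This gives the equivalence with circulating orientations.
\end{itemize}
For \textsc{Path Discovery} rather than only the shortest variant, we argue that any $s$-$t$ path using a caterpillar detour has more than $m+W+2$ vertices. Realizing it would need more than $W$ new tokens, each at cost at least $L$, which exceeds $b$. Hence the solutions coincide.

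\textbf{Degree bound.} Chain vertices $q^t_i$ have degree at most $4$, branch vertices have degree $3$ (two path neighbours plus one leg), and caterpillar vertices have degree at most $3$. So $\Delta\le 4$. If parallel structure appears at $s$ or $t$, it is also bounded.

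\textbf{Pathwidth bound (main obstacle).} This is the delicate step. Adding a caterpillar rooted at $p_v$ changes what must be kept in bags: instead of $p_v$, we carry the current spine vertex $c^v_j$. We sweep along the decomposition from \Cref{lem:pathwidth-spine-closure} in the edge order. Because the $z^v_j$ are ordered consistently with $e_1,\dots,e_m$, when processing gadget $Q_i$ we hold, for each of the at most $k+1$ active $v$, only the current spine vertex $c^v_j$. We introduce $c^v_{j+1}$ and forget $c^v_j$ as we advance. Each leg is a path between $c^v_j$ and $z^v_j$, and a path hanging between two bag-present vertices adds only $O(1)$ to the width when inserted locally. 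We handle these legs one at a time, with both endpoints temporarily in the bag.

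The subtle point is that $v$ must be present exactly on an interval. The spine of $v$ is needed only between the first and last edges incident to $v$ in the edge order. These positions lie within $v$'s interval of the original decomposition, up to the reordering by $p(e)$. This yields width $O(k)$, since each bag holds at most $k+1$ spine vertices, the current gadget's two boundary vertices, one leg in progress, and $O(1)$ branch vertices. Combining this with $\Delta\le4$ gives cutwidth $O(k)$, completing the proof.
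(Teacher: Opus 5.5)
Your proposal is correct and follows the paper's route. You use the caterpillar replacement, inherit correctness from \Cref{thm:xnlp-hardness-pathwidth}, and bound cutwidth via constant maximum degree plus pathwidth $O(k)$, which is exactly the ``equivalently'' argument the paper gives alongside its direct linear-layout count; your sweep that holds one current spine vertex per active $v$ is the bag version of that layout. One small repair: in your \textsc{Path Discovery} paragraph, the claim that every new token costs at least $L$ is false for caterpillar vertices near $p_v$, and such a path may also skip mandatory chain vertices, so argue instead that any $s$-$t$ path through a caterpillar contains two legs and hence more vertices than there are tokens in total, which makes it unrealizable for any budget.
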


\shortlongversion{}{
\begin{proof}
We use the construction from \Cref{thm:xnlp-hardness-pathwidth}, but replace each
reservoir vertex by the subdivided caterpillar described above.

Correctness is exactly as before. The replacement preserves the property that,
for every vertex $v\in V(H)$ and every internal branch vertex associated with
$v$, the distance from the reservoir root $p_v$ to that branch vertex is exactly
$L$. Therefore, every feasible solution of cost at most $W\cdot L$ must move
exactly $W$ reservoir tokens, each over distance exactly $L$, and the chosen
branches again encode a circulating orientation of $(H,w)$.

It remains to prove that the constructed graph has cutwidth bounded by a function
of the pathwidth $k$ of the source graph. We describe a linear layout. Start with
the path decomposition of $H$ used in \Cref{lem:pathwidth-spine-closure}, and use
the induced edge order $e_1,\ldots,e_m$. We place the edge gadgets in this order.
For every vertex $v\in V(H)$, the vertices
$c_0^v,c_1^v,\ldots,c_{W(v)}^v$
of the reservoir spine are distributed along the interval of the path
decomposition in which $v$ is active, in the same order as the corresponding
branch vertices $z_1^v,\ldots,z_{W(v)}^v$ occur along the gadget chain. The
subdivided hair from $c_j^v$ to $z_j^v$ is placed locally next to the branch
vertex~$z_j^v$.

Consider any cut in this layout. The only long edges or paths that may cross the
cut are the reservoir spines of vertices $v$ whose activity interval contains the
current position. Since the source path decomposition has width $k$, there are
at most $k+1$ such vertices. Each active reservoir spine contributes at most one
spine edge crossing the cut. Each local hair contributes only constantly many
crossing edges, because its internal vertices are placed consecutively next to
the branch vertex to which it attaches. The edge gadget currently intersected by
the cut contributes only a constant number of edges. Hence, the number of edges
crossing any cut is bounded by a function of $k$.

Equivalently, the construction has pathwidth bounded by a function of $k$ and
maximum degree bounded by a constant, so by the inequalities above its cutwidth
is bounded by a function of $k$. Therefore, the reduction is a parameterized
reduction to \textsc{Shortest Path Discovery} parameterized by cutwidth. Since
\textsc{Circulating Orientation} is \XNLP-hard parameterized by pathwidth, the
claim follows.
\end{proof}}

\subsection{Further hardness results in the weighted two-graph model}
\label{subsec:two-graph-hardness}

We now record further hardness consequences of the same construction with slight variations.
Throughout this subsection, the problem graph $G$ and the movement graph $M$ may differ. We
use the construction from the proof of
\Cref{thm:xnlp-hardness-pathwidth}.
The reductions below only modify the weights, structure and/or the movement graph. The
correctness argument is always the same. 

The same reductions can also be made acyclic in the directed two-graph model.

\begin{proposition}\label[proposition]{prop:dfvs}
   \PathDiscovery in the directed two-graph model is para-NP-hard when
   parameter\-ized by the directed feedback vertex set of the union of problem graph $G$ and movement graph $M$.
\end{proposition}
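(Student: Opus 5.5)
We take the reduction from \textsc{Circulating Orientation} used in the proof of \Cref{thm:xnlp-hardness-pathwidth} and orient it so that the union of $G$ and $M$ becomes a DAG. The directed feedback vertex set is then $0$, and NP-hardness for this constant value gives para-NP-hardness. We use the plain (non-planar) variant of the construction, since \textsc{Circulating Orientation} is NP-hard in general (already on planar graphs~\cite{DidimoLiottaPatrignani2019HVPlanarity}).

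\textbf{Problem graph.} $G$ consists only of the gadget chain. We orient every chain edge forward: $s\to q_1^s$, the branch edges of $Q_i^{u_i}$ and $Q_i^{v_i}$ from $q_i^s$ towards $q_i^t=q_{i+1}^s$, and $q_m^t\to t$. All weights are $1$. Then $G$ is acyclic, and its directed $s$-$t$-paths are exactly the paths that select one branch per gadget. These are also exactly the shortest paths, so the argument covers both problem variants. The reservoir vertices $p_v$ and the attachment paths are isolated in $G$; they lie only in $M$.

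\textbf{Movement graph.} For every $v$ and every internal branch vertex $z\in Z_v$, $M$ contains the attachment path from $p_v$ to $z$, oriented from $p_v$ towards $z$ with unit costs. We can even shorten it to a single arc $(p_v,z)$ of cost $1$, since weights now replace the length-$L$ subdivisions. No other arcs are needed in $M$: chain tokens never have to move, and lifting is free. The initial configuration puts one token on each mandatory chain vertex and $W(v)/2$ tokens on $p_v$; the budget is $b=W$.

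\textbf{Acyclicity of $G\cup M$.} Every arc goes either forward along the chain order or from some $p_v$ into the chain. No arc enters any $p_v$. Hence a topological order exists: all reservoirs first, then the chain vertices in path order, with branch internals ordered along their branch. So the union is a DAG.

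\textbf{Correctness.} The argument repeats the one in \Cref{thm:xnlp-hardness-pathwidth}. Every $s$-$t$-path in $G$ has exactly $W$ internal branch vertices, and none of them carries a token. The only movable tokens that can reach them are reservoir tokens, and a reservoir token from $p_v$ can reach exactly the vertices of $Z_v$, each at cost $1$. Chain tokens have out-arcs in $M$ only if we add them, and we do not. Hence with budget $W$ each of the $W$ required vertices is filled by one reservoir token from the associated reservoir. The branch choices therefore use exactly $W(v)/2$ weight at each $v$, which is a circulating orientation. The converse direction is identical to before.

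\textbf{Main obstacle.} The only delicate point is that exactly $W$ vertices must be filled. Reservoirs hold $\sum_v W(v)/2=W$ tokens, so all of them are used, and no token can serve a vertex outside its own $Z_v$. With unit arcs and no alternative routes this holds directly, because reachability in $M$ is restricted by construction.
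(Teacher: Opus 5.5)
Your proposal is correct and takes the paper's approach: orient the chain of the reduction from \Cref{thm:xnlp-hardness-pathwidth} from $s$ towards $t$, and orient the reservoir connections from the reservoir towards the gadget vertices. The directed feedback vertex set is then $0$, so NP-hardness on these instances gives para-NP-hardness. Your version is somewhat more careful than the paper's sketch in two ways: you check acyclicity of the union $G\cup M$ directly, whereas the paper only observes that $G$ and $M$ are each DAGs, and you give chain vertices no out-arcs in $M$, which makes the tight-budget accounting immediate.
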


\shortlongversion{}{
   \begin{proof}[Proof sketch]
Orient every edge of the problem graph from left to right along the gadget chain,
that is, from~$s$ towards $t$. Orient every movement edge, or every subdivided
movement path, from the reservoir side towards the corresponding gadget vertex.
Then both $G$ and $M$ are DAGs. In particular, their directed feedback vertex set
numbers are $0$.
\end{proof}}

\begin{proposition}\label{prop:path-diameter-positive-hard-new}
\PathDiscovery in the two-graph model is para-NP-hard when
parameter\-ized by the weighted diameter of the problem graph $G$, even if all edge
weights of $G$ are positive integers. In fact, the hardness already holds for
instances in which the weighted diameter of~$G$ is bounded by an absolute
constant.
\end{proposition}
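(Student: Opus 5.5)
Since \textsc{Path Discovery} no longer requires a shortest path, the length restriction on the chain is no longer enforced by distances. The idea is to take the gadget-chain construction from the proof of \Cref{thm:xnlp-hardness-pathwidth} and make the problem graph $G$ consist only of the gadget chain. We start from an NP-hard instance $(H,w)$ of \textsc{Circulating Orientation} (e.g.\ the planar one). The reservoir structures are moved entirely into the movement graph $M$. They do not appear as edges of $G$, so no $s$-$t$-path can use them, and they do not affect the diameter of $G$.

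To make the weighted diameter of $G$ constant, we add shortcut edges to $G$ and rely on the discovery requirement to make them useless. Concretely, we add a new hub vertex $h$ to $G$, joined to every vertex by edges in both directions of weight $1$. All original chain edges also get weight $1$. Every pair of vertices is then at distance at most $2$, so the weighted diameter is at most $2$, and all weights are positive. The key point is that $h$ carries no token and is unreachable in $M$ (no movement edges enter $h$). Consequently, no discovered path can contain $h$, and every discovered $s$-$t$-path avoids $h$. Such a path must therefore lie in the original gadget chain, where it traverses $Q_1,\dots,Q_m$ in order and picks one branch per gadget. Here I want $G$ undirected (or bidirected) on the chain. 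I will check that backtracking is impossible: the chain is a series composition of two-branch cycles joined at cut vertices $q_i^t$, so a simple $s$-$t$-path avoiding $h$ must pick exactly one branch per gadget.

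The movement graph $M$ is exactly the reservoir part of the earlier construction. It contains the reservoir vertices $p_v$ with $W(v)/2$ tokens each, together with paths of length $L$ (or single edges of weight $L$) from $p_v$ to each internal branch vertex associated with $v$. The chain vertices $q_i^s,q_i^t,s,t$ carry tokens initially, and the budget is $b=W\cdot L$. Zero movement weights are not needed.

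The correctness argument is then the same as before, with one check required. Since path length is no longer constrained, I need that every $h$-free $s$-$t$-path still has exactly $W$ internal branch vertices that must be filled, each filling costing exactly $L$ from the corresponding reservoir. Then the tight budget forces every reservoir token to be used exactly once, and the counting at each $p_v$ yields weighted outdegree $W(v)/2$, that is, a circulating orientation. The converse direction is verbatim. The main obstacle is this step: confirming that the untokened, unreachable hub truly cannot be used, and that the problem-graph shortcuts create no alternative $s$-$t$-paths avoiding $h$. Everything else is immediate, since the reduction is polynomial and the resulting diameter is the constant $2$, which gives para-NP-hardness.
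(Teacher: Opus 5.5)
Your proof is correct and follows the paper's approach: the reservoir accounting stays in $M$, and $G$ gets a positive-weight structure that shrinks the diameter without creating any \emph{realizable} new $s$-$t$-path; your untokened hub $h$, which no movement edge enters, is an explicit instance of the paper's unspecified ``diameter-control structure'', and stripping the reservoir paths out of $G$ makes the one-branch-per-gadget argument immediate. One small precision: since $G$ and $M$ share a vertex set, the reservoir and subdivision vertices still lie in $V(G)$ (isolated in $G-h$), so it is the hub edges to them, not their absence from $G$, that keep the weighted diameter at $2$.
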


\shortlongversion{}{
\begin{proof}[Proof sketch]
We start from the construction of \Cref{thm:xnlp-hardness-pathwidth}. The movement graph is kept as
in the basic reduction, so the reservoir accounting remains unchanged. We modify
only the problem graph in order to make its weighted diameter constant. This is
done by adding a diameter-control structure with positive edge weights which
places all vertices at constant weighted distance from each other, while not
creating any new reachable $s$-$t$.

Since the movement graph and the token placement are unchanged, any solution
still has to occupy the internal vertices of one branch in each edge gadget.
Thus, a solution again determines, and is determined by, a circulating orientation
of the source instance. The added diameter-control edges have only the purpose of
bounding the weighted diameter of $G$ by an absolute constant. Hence, the problem
is NP-hard already for constant weighted diameter, and therefore para-NP-hard for
this parameter.
\end{proof}}

\begin{proposition}\label{prop:diam-zero-hard-new}
If zero weights are allowed in the problem graph $G$, then \BothPathDiscovery is para-NP-hard
when parameterized by the weighted diameter of $G$. In fact, this already holds
for instances with weighted diameter $0$.
\end{proposition}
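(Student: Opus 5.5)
We start from the construction in the proof of \Cref{thm:xnlp-hardness-pathwidth} (or the planar one of \Cref{thm:planar-hardness}), which is NP-hard already for unbounded parameter, and change only the problem graph $G$ and its weights. The movement graph $M$, the initial configuration $S$ and the budget $b=W\cdot L$ stay as they are, so the reservoir accounting from that proof can be reused verbatim. The aim is a problem graph in which every pair of vertices has weighted distance $0$ in both directions, no negative cycle exists, and the relevant $s$-$t$-paths are still exactly the paths choosing one branch per edge gadget.

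\textbf{Step 1 (the problem graph).} We let $G$ consist only of the directed gadget chain: the problem-graph edges are oriented from $s$ towards $t$ as in \Cref{prop:dfvs}, and each edge gets weight $0$. The reservoir paths belong to $M$ only, not to $G$. Then every $s$-$t$-path in $G$ traverses $Q_1,\dots,Q_m$ in order and picks one branch per gadget, and all of them have weight $0=\dist_G(s,t)$. Hence for \ShortestPathDiscovery every $s$-$t$-path is shortest, $G^\star=G$ by \Cref{obs:SPD=PD}, and the two problems coincide on these instances.

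\textbf{Step 2 (diameter $0$).} To make all pairwise distances $0$ without creating new $s$-$t$-paths, we add back-edges of weight $0$ from $t$ to every vertex and from every vertex to $s$. Since $G$ has no negative edges, there is no negative cycle. Every vertex now reaches every other vertex via $t$ or $s$ at weight $0$, so the weighted diameter is $0$.

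A simple $s$-$t$-path cannot use any of the new edges. It cannot enter $s$, since $s$ is its first vertex. It cannot leave $t$, since $t$ is its last vertex, and the only edges leaving $t$ are the new ones. So the set of simple $s$-$t$-paths is unchanged, and all of them still have weight $0$. This also covers \ShortestPathDiscovery.

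\textbf{Step 3 (correctness).} This is identical to \Cref{thm:xnlp-hardness-pathwidth}. The mandatory chain vertices carry tokens, exactly $W$ branch vertices must be filled, each costs at least $L$ in $M$, and the budget is tight. Hence the chosen branches encode a circulating orientation, and conversely a circulating orientation gives a solution. NP-hardness with the parameter fixed to $0$ yields para-NP-hardness.

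The main obstacle is Step 2: making sure the added diameter-control edges create neither negative cycles nor new simple $s$-$t$-paths that bypass the gadgets. The construction above routes all new edges into $s$ or out of $t$ precisely so that both checks reduce to the observation that a simple path cannot use them.
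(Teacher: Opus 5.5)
Your proof is correct, but it takes a different route from the paper. The paper does not separate $G$ from $M$. It keeps the undirected one-graph instance of \Cref{thm:xnlp-hardness-pathwidth}, reservoir paths included, as the problem graph and only sets every problem-graph weight to $0$. Since that graph is connected, the weighted diameter is $0$ at once. \PathDiscovery is unaffected by the weights, and for \ShortestPathDiscovery every $s$-$t$-path becomes shortest.

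You instead restrict $G$ to the directed gadget chain, restore strong connectivity with zero-weight hub edges into $s$ and out of $t$, and check that no simple $s$-$t$-path can use them.

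What each route buys:
\begin{itemize}
\item The paper's version needs no new edges. However, its remark that ``the set of relevant branch-choice paths is unchanged'' tacitly relies on an unstated argument. The undirected $G$ contains $s$-$t$-paths that detour through two reservoir paths, and once all weights are $0$ these are shortest as well. So one must argue they cannot be realized, e.g.\ because such a detour has about $2L$ vertices, more than the total number of tokens.
\item Your construction makes the $s$-$t$-paths of $G$ exactly the branch-choice paths by design, so the reservoir accounting of \Cref{thm:xnlp-hardness-pathwidth} carries over verbatim. It is also closer in spirit to the ``diameter-control structure'' the paper uses for \Cref{prop:path-diameter-positive-hard-new}, and it actually uses the freedom of the two-graph model.
\end{itemize}

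One small precision in your Step~3, inherited from the paper's wording: it is not true that each branch vertex costs at least $L$, since a chain token next to it could fill it cheaply. The correct form of the argument is that the token count is tight, so all $W$ reservoir tokens must be used. Each of them is at distance at least $L$ from every vertex of the chain, with equality only at the branch vertices of its own reservoir.
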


\shortlongversion{}{
\begin{proof}[Proof sketch]
We use the same construction, and assign weight $0$ to every edge of the
problem graph $G$. Then the weighted diameter of $G$ is $0$. The movement graph,
the initial token placement, and the budget are unchanged.

For \textsc{Path Discovery}, edge weights in $G$ do not affect feasibility.
For \textsc{Shortest Path Discovery}, every $s$-$t$ path has weight $0$ and is
therefore shortest. Thus, the set of relevant branch-choice paths is unchanged.
The standard reservoir accounting then gives a one-to-one correspondence between
solutions and circulating orientations. Hence, both problems are NP-hard already
when the weighted diameter of $G$ is $0$.
\end{proof}}

\begin{theorem}\label{thm:b-zero-hard-new}
If zero weights are allowed in the movement graph $M$, then \BothPathDiscovery is para-NP-hard
when parameterized by the budget $b$. In fact, this already holds for $b=0$.
\end{theorem}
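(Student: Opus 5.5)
We reduce from \textsc{Circulating Orientation}, which is NP-hard (already on planar instances~\cite{DidimoLiottaPatrignani2019HVPlanarity}). We reuse the gadget chain from the proof of \Cref{thm:xnlp-hardness-pathwidth} as the problem graph $G$, together with the same mandatory chain tokens and the same reservoirs. The idea is to replace the tight budget $b=W\cdot L$ by \emph{zero-weight movement edges}, whose availability itself encodes which reservoir may serve which branch vertex. Then $b=0$ is a constant, and NP-hardness for $b=0$ gives para-NP-hardness for the parameter $b$.

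\textbf{Construction.} Keep $G$ exactly as the gadget chain. For each edge $e_i=u_iv_i$ there are two branches $Q_i^{u_i}$ and $Q_i^{v_i}$, each with $w(e_i)$ internal vertices, and all of them lie on the single chain between $s$ and $t$; no reservoir paths are added to $G$. Every $s$-$t$-path in $G$ picks exactly one branch per gadget, and all such paths have equal length, so the argument also covers \textsc{Shortest Path Discovery} via \Cref{obs:SPD=PD}. For the movement graph $M$ on the same vertex set we add, for every $v\in V(H)$, a reservoir vertex $p_v$ that is isolated in $G$ and carries $W(v)/2$ tokens. We also add directed movement arcs $(p_v,z)$ of weight $0$ for every internal branch vertex $z$ associated with $v$. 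No other movement arcs exist; in particular, the mandatory chain tokens on $q_i^s,q_i^t,s,t$ have no outgoing arcs and cannot move. The budget is $b=0$. If some $W(v)$ is odd we output a trivial no-instance.

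\textbf{Correctness.} Every move has cost $0$, so the budget imposes no restriction; the structure of $M$ does all the work. A realizable target must occupy the mandatory chain vertices, which are already occupied, and the $W$ internal vertices of the chosen branches. These internal vertices can only receive tokens along arcs $(p_v,z)$, and hence only from the reservoir associated with the branch. The total number of reservoir tokens is $\sum_v W(v)/2=W$, which is exactly the demand. Therefore every reservoir is emptied exactly, and the number of branch vertices associated with $v$ on the path equals $W(v)/2$. Orienting $e_i$ out of the endpoint whose branch is chosen then gives weighted outdegree $W(v)/2$ at every vertex, that is, a circulating orientation. Conversely, a circulating orientation determines the branch choices, and the reservoir tokens jump at cost $0$ to the required vertices. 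Tokens are indistinguishable and stacking is allowed, so there are no interference issues.

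\textbf{Main obstacle.} The delicate point is making sure that no unintended token supply exists. Mandatory tokens must not be able to slide onto branch vertices; this is guaranteed because they have no outgoing movement arcs. Reservoir tokens must not reach branch vertices of other reservoirs; this is guaranteed because the arcs go directly to branch vertices and branch vertices have no outgoing arcs. Once the movement graph is this restricted, the counting argument is the same as in \Cref{thm:xnlp-hardness-pathwidth}, and the construction is clearly polynomial.
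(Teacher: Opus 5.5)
Your proposal is correct and takes essentially the same approach as the paper's proof. Both keep the gadget chain and token placement from \Cref{thm:xnlp-hardness-pathwidth}, connect each reservoir $p_v$ only to its own branch vertices by zero-weight movement arcs, and set $b=0$; the tight count $\sum_v W(v)/2=W$ then forces every reservoir to be emptied exactly, which is the circulating condition. Your specific choices (direct arcs instead of subdivided paths, no reservoir paths in $G$, an explicit token on $s$, and immobile mandatory tokens) are consistent with the paper's sketch and make it slightly more precise.
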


\shortlongversion{}{
\begin{proof}[Proof sketch]
Again use the same problem graph $G$ and the same initial token placement. For
every reservoir vertex $p_v$ and every internal branch vertex associated with
$v$, add a movement edge, or a subdivided movement path, from $p_v$ to that
branch vertex of total weight $0$. No other movement edges are needed. Set
$b:=0$.

Thus, a token can be moved for free from the reservoir of $v$ to any branch vertex
associated with $v$, but not to branch vertices associated with other vertices.
Consequently, the budget no longer measures distance, but the movement graph still
enforces the endpoint choice. A discovered path therefore chooses one branch in
each edge gadget, and the chosen branches can be filled with budget $0$ exactly
when every reservoir is used exactly up to its capacity. This is precisely the
circulating-orientation condition. The same argument applies to
\textsc{Shortest Path Discovery}, since the two branches of each gadget have
equal length and hence all branch-choice paths are shortest $s$-$t$ paths.
Therefore, the problems are NP-hard already for $b=0$.
\end{proof}}

\section{Conclusion}

We introduced a directed weighted two-graph model for solution discovery, in which the
graph defining feasibility is separated from the graph defining movement.  This separation
captures situations in which the combinatorial structure of the desired solution and the
constraints governing how tokens, agents, or resources may move are inherently different.
Using \textsc{Path Discovery} and \textsc{Shortest Path Discovery} as test cases, we obtained
a varied complexity landscape: the model admits efficient algorithms in several natural
settings, including for bounded numbers of tokens, token jumping, bounded solution size,
and small feedback edge set, while it remains hard under several structural restrictions and
for important parameters.

These results indicate that the two-graph perspective is not merely a technical
generalization of the classical token-sliding and token-jumping models, but a useful
framework for under\-standing how feasibility and movement interact.  A natural direction
for future work is to study further discovery problems in this model, such as variants of
vertex cover, domination, matching, cut, or clustering problems, and to determine which
features of the feasibility problem and of the movement graph govern tractability.  It would
also be interesting to investigate the same separation of feasibility and movement in the
classical reconfiguration setting, where both the initial and target solutions are prescribed,
and all intermediate configurations are required to remain feasible.  We expect that this
two-graph viewpoint will lead to new algorithmic questions and to a more refined
understanding of reconfiguration and discovery under realistic movement constraints.


\bibliographystyle{plainurl}
\bibliography{ref}
\end{document}